\algnewcommand\ExitFor{\textbf{exit for}}
\newtheorem{thm}{Theorem}[section]
\newtheorem{prop}[thm]{Proposition}
\newtheorem{conjecture}[thm]{Conjecture}
\newtheorem{defi}[thm]{Definition}
\newtheorem{lemma}[thm]{Lemma}
\newtheorem{exm}[thm]{Example}
\newcommand{\C} {\mathbb C}
\newcommand{\K}{\mathbb K}
\newcommand{\Lex}{Lex}
\newcommand{\NP}{\text{NP}}
\newcommand{\xx}[0]{\textbf{x}}
\newcommand{\z}[0]{\zeta}
\newcommand{\lz}[0]{\langle}
\newcommand{\rz}[0]{\rangle}
\newcommand{\A}[0]{\mathcal{A}}
\newcommand{\F}[0]{\mathcal{F}}
\newcommand{\G}[0]{\mathcal{G}}
\newcommand{\V}[0]{\mathcal{V}}
\newcommand{\I}[0]{\mathcal{I}}
\def \C{\mathbb{C}}
\def \supp {\mbox{supp}}
\def \rad {\mbox{rad}}
\def \LM {\mbox{LM}}
\def \len {\mbox{len}}
\begin{document}
\bibliographystyle{alpha}
\unitlength=1cm

\title{Gr\"obner Bases and Nullstellens\"atze for Graph-Coloring Ideals}

\author{Jes\'us A. De Loera\thanks{University of California, Davis, \url{deloera@math.ucdavis.edu}}, Susan Margulies\thanks{United States Naval Academy, \url{margulie@usna.edu}}, \\Michael Pernpeintner\thanks{Technische Universit\"at M\"unchen, \url{michaelpernpeintner@gmail.com}}, Eric Riedl\thanks{Harvard University, \url{ebriedl@math.harvard.edu}}, David Rolnick\thanks{Massachusetts Institute of Technology, \url{drolnick@math.mit.edu}}, \\Gwen Spencer\thanks{Smith College, \url{gspencer@smith.edu}}, Despina Stasi\thanks{Illinois Institute of Technology, \url{stasdes@iit.edu}}, Jon Swenson\thanks{University of Washington, \url{jmswen@math.washington.edu}}}

\maketitle

\begin{abstract}
We revisit a well-known family of polynomial ideals encoding the problem of graph-$k$-colorability. Our paper describes how the inherent combinatorial structure of the ideals implies several interesting algebraic properties.  Specifically, we provide lower bounds on the difficulty of computing Gr\"obner bases and Nullstellensatz certificates for the coloring ideals of general graphs.  For chordal graphs, however, we explicitly describe a Gr\"obner basis for the coloring ideal, and provide a polynomial-time algorithm.
\end{abstract}
%

\section{Introduction}

Many authors in computer algebra and complexity theory have studied the complexity of Gr\"obner bases (see e.g., \cite{dube,MayrMeyer,MayrRitscher,ritscher,GG} and references therein) and
the difficulty of Hilbert's Nullstellensatz (see \cite{beamer,brownawell, buss,D'Andreaetal,kollar,lazard,Mayrsurvey}). With few exceptions authors have concentrated 
on proving worst-case upper bounds. 
In this paper we look at the behavior of Gr\"obner bases and Hilbert Nullstellens\"atze in a
combinatorial family of polynomials. Our key point is to study how the structure of graph coloring problems provides lower bounds on the difficulty of finding Gr\"obner bases and Nullstellensatz certificates, providing a counterpart to upper bound theorems for general polynomial systems.

Many authors have studied the rich connection between graphs and polynomials (see e.g., \cite{alonsurvey,alontarsi,deloera,matiyasevich2,Mnuk,lovasz1}
Here our starting point is Bayer's theorem for $3$-colorings \cite{bayer}, further generalized in \cite{DLMM_issac, DLMO} to $k$-coloring over a finite field: \\[5pt]
Suppose we wish to check whether a graph $G=(V, E)$ is $k$-colorable, and set $n=|V|$.  We consider the \emph{$k$-coloring ideal} $\I_k(G) \subset \mathbb C[x_1,\ldots, x_n]$ (also denoted by $\I_G$ if the number of colors is clear) generated by the {\em vertex polynomials} $\nu_i:=x_i^k-1$, for $1\leq i\leq n$, and the {\em edge polynomials} $\eta_{i, j}:=\sum_{l=0}^{k-1} x_i^lx_j^{k-1-l}$, for $\{i, j\}\in E$. The set of all vertex and edge polynomials of a graph $G$ is denoted by $\F_G$.

\begin{thm}[\cite{DLMM_issac, DLMO}]
The graph $G$ is $k$-colorable if and only if $\I_k(G)$ has a common root. In other words, $G$ is not $k$-colorable if and only if $\I_k(G)=\lz1\rz=\C[x_1,\dots,x_n]$. Moreover, the dimension of the vector space $\C[x_1,\dots,x_n]/\I_k(G)$ equals $k!$ times the number of distinct $k$-colorings of $G$.
\end{thm}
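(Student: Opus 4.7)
The plan is to deduce all three assertions from the factorization identity $x_i^k-x_j^k=(x_i-x_j)\,\eta_{i,j}$ combined with Hilbert's Nullstellensatz over $\C$.

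First, I would identify the affine variety $\V(\I_k(G))\subset\C^n$ with the set of proper $k$-colorings of $G$ whose color palette is the group $\mu_k$ of $k$-th roots of unity. Indeed, $\nu_i=x_i^k-1\in\I_k(G)$ forces the $i$-th coordinate of any common root to lie in $\mu_k$. Given $x_i,x_j\in\mu_k$, the factorization identity yields $(x_i-x_j)\,\eta_{i,j}=x_i^k-x_j^k=0$, so $\eta_{i,j}(x_i,x_j)=0$ precisely when $x_i\neq x_j$ (a direct substitution gives $\eta_{i,j}=k\,x_i^{k-1}\neq 0$ when $x_i=x_j\in\mu_k$). Thus common roots correspond bijectively to functions $c\colon V\to\mu_k$ satisfying $c(i)\neq c(j)$ for all $\{i,j\}\in E$, i.e., to proper $k$-colorings of $G$. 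The first two assertions follow at once: $G$ is $k$-colorable if and only if $\V(\I_k(G))\neq\emptyset$, and over the algebraically closed field $\C$ the weak Nullstellensatz upgrades emptiness of the variety to the ideal identity $\I_k(G)=\lz1\rz$.

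For the dimension formula I would first show that $\I_k(G)$ is zero-dimensional and radical. Zero-dimensionality is immediate since each $\nu_i\in\I_k(G)$ provides a univariate relation in $x_i$, forcing $\C[x_1,\dots,x_n]/\I_k(G)$ to be finite-dimensional. For radicality I would invoke Seidenberg's lemma: any ideal in $\C[x_1,\dots,x_n]$ containing a squarefree univariate polynomial in each variable is automatically radical, and $x_i^k-1=\prod_{\zeta\in\mu_k}(x_i-\zeta)$ is squarefree because $\mu_k$ consists of $k$ distinct complex numbers. Standard zero-dimensional Nullstellensatz theory (or a direct application of the Chinese Remainder Theorem giving $\C[x_1,\dots,x_n]/\I_k(G)\cong\prod_{p\in\V(\I_k(G))}\C$) then yields $\dim_\C\C[x_1,\dots,x_n]/\I_k(G)=|\V(\I_k(G))|$. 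Finally, the symmetric group $S_k$ acts on $\V(\I_k(G))$ by permuting the palette $\mu_k$; under the convention of the cited papers each orbit of size $k!$ represents a single ``distinct'' $k$-coloring of $G$, so partitioning the variety into orbits converts the point count into $k!\cdot(\text{number of distinct $k$-colorings of $G$})$.

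I expect establishing radicality to be the main technical hurdle: without it the equality between dimension and point count can fail. The colorability characterizations are essentially formal once the factorization identity is in hand, while the dimension equality reduces, modulo Seidenberg's lemma, to a standard application of the zero-dimensional Nullstellensatz and the orbit decomposition under $S_k$.
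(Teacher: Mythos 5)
The paper states this result as a citation to prior work and gives no proof, so there is no in-paper argument to compare against. Your treatment of the colorability equivalences is correct and is the standard route: the factorization $(x_i-x_j)\eta_{i,j}=x_i^k-x_j^k$ identifies $\V(\I_k(G))$ with proper colorings $V\to\mu_k$ (where $\mu_k$ denotes the $k$th roots of unity), and the weak Nullstellensatz over $\C$ upgrades emptiness of that variety to $\I_k(G)=\lz 1\rz$. Your appeal to Seidenberg's lemma for radicality and the resulting count $\dim_\C \C[x_1,\dots,x_n]/\I_k(G)=|\V(\I_k(G))|$ are also fine; indeed the paper itself invokes Seidenberg's lemma in exactly this way when proving later that $\I_G$ is radical.

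The $k!$ step, however, has a genuine gap. The $S_k$-orbit of a proper coloring $c\colon V\to\mu_k$ has size $k!/(k-|c(V)|)!$, not $k!$: any permutation fixing the colors in the image of $c$ pointwise while moving the unused colors stabilizes $c$. So your orbit decomposition yields $|\V(\I_k(G))|=k!\cdot(\text{number of color-class partitions})$ only when every proper coloring of $G$ is surjective onto the palette, which holds precisely when $\chi(G)=k$. Without that hypothesis the stated equality fails outright; for instance, $G$ a single vertex with $k=3$ gives $\dim=3$ while $k!=6$. Appealing to ``the convention of the cited papers'' does not close this hole. You would need either to add the hypothesis $\chi(G)=k$ (so ``distinct $k$-colorings'' means partitions into exactly $k$ independent sets and every orbit genuinely has size $k!$), or to weight each color-class partition using $j$ classes by the correct orbit size $k!/(k-j)!$, which recovers the chromatic-polynomial count of colorings but not a clean $k!$ multiple.
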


From the well-known Hilbert's Nullstellensatz \cite{coxetal}, one can derive {\em certificates} that a system of polynomials has no solution (i.e., in our case, that a graph does not have a $k$-coloring).

\begin{thm}[Hilbert's Nullstellensatz \cite{coxetal}]
Suppose that $f_1,\ldots,f_m\in \K[x_1,\ldots,x_n]$.  Then, there are no solutions to the system $\{f_i=0\}$ in the algebraic closure of $\K$, if and only if there exist $\alpha_i\in \K[x_1,\ldots,x_n]$ such that $$\alpha_1f_1+\cdots+\alpha_mf_m=1.$$
\end{thm}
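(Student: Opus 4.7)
The ($\Leftarrow$) direction is immediate: if $\alpha_1 f_1 + \cdots + \alpha_m f_m = 1$ and $(a_1,\ldots,a_n)\in\overline{\K}^n$ were a common zero of all the $f_i$, then evaluating this identity at the point $(a_1,\ldots,a_n)$ would give $0=1$. So the whole content of the theorem lies in the forward direction, which I would prove by contrapositive: assuming $I:=(f_1,\ldots,f_m)\subsetneq \K[x_1,\ldots,x_n]$, I need to produce a common root of the $f_i$ in $\overline{\K}^n$.

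My plan is the standard maximal-ideal approach. Since $I$ is a proper ideal, Zorn's lemma places it inside a maximal ideal $\mathfrak{m}\subset \K[x_1,\ldots,x_n]$; the quotient $L:=\K[x_1,\ldots,x_n]/\mathfrak{m}$ is a field which, by construction, is finitely generated as a $\K$-algebra by the images $\overline{x_1},\ldots,\overline{x_n}$ of the coordinate functions. The goal is then to embed $L$ into $\overline{\K}$ as a $\K$-algebra: if such an embedding $\phi:L\hookrightarrow\overline{\K}$ exists, then setting $a_i:=\phi(\overline{x_i})$ yields a tuple in $\overline{\K}^n$, and each relation $f_j\in\mathfrak{m}$ (i.e. $f_j(\overline{x_1},\ldots,\overline{x_n})=0$ in $L$) pushes forward to $f_j(a_1,\ldots,a_n)=0$ in $\overline{\K}$, producing the desired common root.

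The embedding exists provided $L$ is algebraic over $\K$, because any algebraic extension of $\K$ admits a $\K$-embedding into $\overline{\K}$ by a standard Zorn's-lemma argument. So the whole proof reduces to the following key lemma (Zariski's lemma): \emph{any field $L$ which is finitely generated as an algebra over a field $\K$ is algebraic over $\K$.} This is the main obstacle, and I view it as the heart of the theorem.

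To prove Zariski's lemma I would argue by contradiction: if $L$ were not algebraic, choose a transcendence basis $t_1,\ldots,t_r\subset L$ with $r\geq 1$, so that $L$ is a finite algebraic extension of the purely transcendental subfield $\K(t_1,\ldots,t_r)$. The Artin–Tate lemma (a short module-theoretic argument using that $L$ is finite over $\K(t_1,\ldots,t_r)$ and finitely generated as a $\K$-algebra) then implies that $\K(t_1,\ldots,t_r)$ itself is finitely generated as a $\K$-algebra. This gives the contradiction, since a rational function field in $r\geq 1$ variables is never finitely generated as an algebra over the ground field: any finite generating set uses only finitely many irreducible denominators, yet there are infinitely many irreducibles in $\K[t_1,\ldots,t_r]$ (a Euclid-style argument), so inverses of the missing irreducibles cannot be expressed. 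Combining this lemma with the maximal-ideal construction above completes the proof.
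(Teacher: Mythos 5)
The paper does not prove this theorem---it is quoted as known background with a citation to \cite{coxetal}, so there is no in-paper argument to compare against. Your outline is a correct and essentially complete proof by the standard commutative-algebra route: the trivial ($\Leftarrow$) direction; the reduction of the forward direction to producing a point, via a maximal ideal $\mathfrak{m}\supseteq I$ and the residue field $L=\K[x_1,\ldots,x_n]/\mathfrak{m}$; the observation that any $\K$-algebra embedding $L\hookrightarrow\overline{\K}$ sends the images of the $x_i$ to a common zero of the $f_j$; Zariski's lemma to guarantee that $L$ is algebraic over $\K$ (so such an embedding exists); and the Artin--Tate lemma combined with the Euclid-style infinitude of irreducibles in $\K[t_1,\ldots,t_r]$ to prove Zariski's lemma. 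Each step is correct, and this route has the virtue of yielding the statement exactly in the generality the paper uses (arbitrary base field $\K$, solutions sought in $\overline{\K}$). For comparison, the cited source Cox--Little--O'Shea derives the weak Nullstellensatz from elimination theory and resultants (the Extension Theorem), an approach more aligned with their computational emphasis but usually phrased over an algebraically closed field; your argument is the one more commonly seen in commutative algebra texts and is entirely adequate here.
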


We refer to the set $\{\alpha_i\}$ as a \emph{Nullstellensatz certificate}, and measure the complexity of a certificate by its \emph{degree}, defined as the maximum degree of any $\alpha_i$.  If a system is known to have a Nullstellensatz certificate of small constant degree (over a finite field), one can simply find this certificate by a series of linear algebra computations \cite{DLMM_issac,DLMM_jsc,malkinetal}. There are well-known \emph{upper bounds} for the degrees of the coefficients $\alpha_i$ in the Nullstellensatz certificate for \emph{general} systems of polynomials that grow with the number of variables \cite{kollar}. Furthermore, these bounds turn out to be sharp for some pathological instances. 

 
Connections between complexity theory and the Gr\"obner bases and Nullstellens\"atze of the
coloring ideals have been made in \cite{beamer, lovasz1, susan_thesis}: It is known that unless $\text{NP}=\text{coNP}$, there must exist an infinite family of non-3-colorable graphs for which the minimum degree of a Hilbert Nullstellensatz certificate grows arbitrarily large \cite{DLMM_jsc, susan_thesis}.  In upcoming work, Cifuentes and Parrilo \cite{parrilo} identify graph structure within an arbitrary polynomial system and show that this yields faster algorithms for solving systems of polynomials.   For further background on the material presented in this paper, we direct the interested reader to the books \cite{adamsloustaunau,coxetal,coxetalII,gareyjohnson}.

This article offers three new contributions in the complexity of working with coloring ideals.

(1) In Section \ref{certificates}, we show that the minimal degree of Nullstellensatz certificates of coloring ideals satisfies certain modular constraints and grows at least linearly in the number of colors. We indicate that the field of coefficients has some intriguing influence in the complexity and propose a conjecture.

(2) Recall that an algorithm $\A$ is an {\em $\alpha$-approximation algorithm} if, for every input instance of the (minimization) problem, $\A$ delivers a feasible solution of cost no more than $\alpha$ times the optimal possible cost in polynomial time.  It is well-known that many combinatorial problems cannot be well-approximated. For instance, Khanna et al. \cite{khanna} have shown it is $\text{NP}$-hard to 4-color a 3-colorable graph. More strongly, even if one is allowed to ignore a particular small (but non-zero) fraction of nodes, it is $\text{NP}$-hard to properly 4-color the remaining nodes.

In Section \ref{extrahard}, we demonstrate how one can transfer inapproximability results for graphs to inapproximability results for polynomial rings. We prove that it is hard to compute a Gr\"obner basis for an ideal even if we are allowed to ignore a large subset of the generators for our ideal.  This shows how the coloring ideal provides a sense of ``robust'' hardness for the computation of Gr\"obner bases.

(3) Despite hardness in the general case of computing a Gr\"obner basis, we might hope that some algorithm could find Gr\"{o}bner bases efficiently, particularly if we restricted our focus to some special class of relatively simple systems of polynomials. In Section \ref{sec:chordal}, we prove that computing a Gr\"obner basis can be done in polynomial time when the associated graph is chordal, and we describe explicitly the structure of such a Gr\"obner basis.





\section{Lower Bounds on Hardness: Gr\"obner Bases \& Nullstellens\"atze}
\label{ExpectedHardness}

For general ideals $I\subseteq\mathbb{K}[x_1, \ldots, x_n]$, it is well-known that the computation of Gr\"obner bases is $\NP$-hard. This follows directly because many $\NP$-hard problems can be easily encoded as the solution of a multivariate polynomial system (see e.g., \cite{bayer,DLMO,deloera}). In particular, it is obvious that if the system of equations in the coloring ideal can be solved in polynomial time (in the input size) for 3-coloring ideals, then $\text{P}=\NP$. What makes this
very interesting is that one can see (or at least try to see) algebraic phenomena that are produced by the separation of complexity classes. For example, assuming that $\text{P} \not= \NP$ then the degree of Nullstellensatz certificates for systems of equations coming from non-3-colorable graphs must show some growth in the degree. In what follows, we discuss two ways in which the hardness of solving the coloring problem algebraically is made concrete.

\subsection{Nullstellens\"atze}
\label{certificates}


In this section, we consider $N_{k,\K}(G)$, the minimal Nullstellensatz degree for the $k$-coloring ideal of a graph $G$ over the field $\K$. We show that $N_{k,\K}(G)$ grows at least linearly with respect to $k$, and provide evidence that the growth is, in fact, faster.  Note that $N_{k,\K}(G)$ is defined for all $G$ that are \emph{not} $k$-colorable, and for all fields $\K$ for which the characteristic does not divide $k$.  Our main result is the following:
 
\begin{thm}
\label{thm:degree}
For all $k,\K,G$, we have $N_{k,\K}(G)\equiv 1\pmod{k}$.  Furthermore $N_{k,\K}(G)\ge k+1$ if $k>3$.
\end{thm}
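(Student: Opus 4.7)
My approach to Theorem \ref{thm:degree} combines a $\Z/k\Z$-grading on $\K[x_1,\dots,x_n]$ with a syzygy reduction (for the congruence) and a direct coefficient-matching argument (for the lower bound). I first grade the polynomial ring by $\Z/k\Z$ via $\deg(x_i)=1$. Then $\nu_i=x_i^k-1$ is $\Z/k$-homogeneous of degree $0$, and $\eta_{ij}=\sum_{l=0}^{k-1}x_i^l x_j^{k-1-l}$ is $\Z/k$-homogeneous of degree $k-1\equiv-1$. Extracting the $\Z/k$-degree-$0$ component of any Nullstellensatz identity $\sum_i\alpha_i\nu_i+\sum_{ij}\beta_{ij}\eta_{ij}=1$ produces a new certificate, of no larger degree, in which each $\alpha_i$ is $\Z/k$-homogeneous of degree $0$ (so $\deg\alpha_i$ is a multiple of $k$) and each $\beta_{ij}$ is $\Z/k$-homogeneous of degree $1$ (so $\deg\beta_{ij}\equiv 1\pmod k$). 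Thus the degree $D$ of any minimal certificate is congruent to either $0$ or $1\pmod k$.

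To establish $N_{k,\K}(G)\equiv 1\pmod k$, I rule out the case $D\equiv 0\pmod k$ by a Koszul-style reduction. If a minimal certificate satisfied $D=\max_i\deg\alpha_i>\max_{ij}\deg\beta_{ij}$, the grading forces $\deg\beta_{ij}\le D-k+1$, so the $\beta$-terms reach total degree at most $D$. Reading off the degree-$(D+k)$ component of the identity then yields $\sum_i\alpha_i^{(D)}x_i^k=0$. Because $x_1^k,\dots,x_n^k$ is a regular sequence in $\K[x_1,\dots,x_n]$, every first syzygy is Koszul, so $\alpha_i^{(D)}=\sum_{j>i}\gamma_{ij}x_j^k-\sum_{j<i}\gamma_{ji}x_j^k$ for suitable $\Z/k$-homogeneous $\gamma_{ij}$ of total degree $D-k$. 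The pairwise identity $x_j^k\nu_i-x_i^k\nu_j=\nu_i-\nu_j$ now lets me rewrite
\[\sum_i\alpha_i^{(D)}\nu_i=\sum_{i<j}\gamma_{ij}(\nu_i-\nu_j),\]
whose coefficients have degree at most $D-k$. Substituting back produces a valid certificate of strictly smaller total degree, contradicting minimality. Hence any minimal certificate satisfies $D=D_\beta\equiv 1\pmod k$.

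For the lower bound $N_{k,\K}(G)\ge k+1$ when $k>3$, by the first part it suffices to exclude $N=1$. In grading-reduced form, a degree-$1$ certificate has $\alpha_i=c_i\in\K$ constant and $\beta_{ij}=L_{ij}=\sum_l a_{ij,l}x_l$ homogeneous linear, so $\sum c_i(x_i^k-1)+\sum L_{ij}\eta_{ij}=1$ splits into $\sum_i c_i=-1$ (constant part) together with $\sum_i c_i x_i^k+\sum_{ij}L_{ij}\eta_{ij}=0$ (degree-$k$ part). The key observation, valid precisely when $k\ge 4$ so that $\{2,\dots,k-2\}$ is nonempty, is that for every edge $\{i,j\}$ and every $a'\in\{2,\dots,k-2\}$ the monomial $x_i^{a'}x_j^{k-a'}$ can appear in a product $L_{i'j'}\eta_{i'j'}$ only when $\{i',j'\}=\{i,j\}$: the factor $\eta_{i'j'}$ lies in $\K[x_{i'},x_{j'}]$ and the linear $L_{i'j'}$ contributes at most one extra variable to the first power, so producing both $x_i^{a'}$ and $x_j^{k-a'}$ with exponents $\ge 2$ forces $\{i',j'\}=\{i,j\}$. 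Matching the coefficient in the degree-$k$ equation gives $a_{ij,i}+a_{ij,j}=0$ for every edge. Separately, the $x_v^k$-coefficient equation gives $c_v=-\sum_{w\in N(v)}a_{vw,v}$; summing over $v$ and invoking $a_{ij,i}+a_{ij,j}=0$ forces $\sum_v c_v=0$, contradicting $\sum_v c_v=-1$. The main technical obstacle is the Koszul reduction in the first part --- one must choose $\gamma_{ij}$ respecting the $\Z/k$-grading and verify that the rewriting actually lowers the certificate degree --- but this reduces to standard facts about regular sequences plus the clean pairwise identity $x_j^k\nu_i-x_i^k\nu_j=\nu_i-\nu_j$.
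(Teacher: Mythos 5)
Your proposal is correct, and both halves reach the same conclusions as the paper, but the route to the congruence $N_{k,\K}(G)\equiv 1\pmod k$ is genuinely different. The paper's proof begins by passing to the quotient ring $\K[x_1,\ldots,x_n]/\langle x_1^k-1,\ldots,x_n^k-1\rangle$: a certificate exists there if and only if it exists upstairs, and the degree does not change, so the vertex polynomials $\nu_i$ disappear from the certificate entirely. After that, only edge polynomials remain, and the $\Z/k$-grading immediately forces the degree to be $\equiv 1$. You instead keep the $\nu_i$ in play and work in the full polynomial ring, so the grading alone only tells you the degree is $\equiv 0$ or $\equiv 1\pmod k$; you then rule out the $\equiv 0$ case via the Koszul syzygies of the regular sequence $x_1^k,\ldots,x_n^k$ and the identity $x_j^k\nu_i-x_i^k\nu_j=\nu_i-\nu_j$. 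That reduction is correct (each iteration drops $\max_i\deg\alpha_i$ by exactly $k$, since in $\Z/k$-homogeneous form the next lower component of each $\alpha_i$ already sits at degree $D-k$), but it re-derives by hand what the quotient-ring observation gives for free; the paper's route is shorter and avoids any appeal to regular sequences. For the $k>3$ lower bound, your coefficient-matching argument is essentially the paper's, modulo bookkeeping: the paper works in the quotient (so what you split into a constant part $\sum c_i=-1$ and a degree-$k$ part recombines into a single constant-term equation $\sum_{ij}(c_{i,ij}+c_{j,ij})=1$), and it pins down a single monomial $x_i^{k-2}x_j^2$ rather than a range $a'\in\{2,\ldots,k-2\}$, but the crux --- that for $k\ge 4$ some monomial $x_i^{a'}x_j^{k-a'}$ with both exponents $\ge 2$ is seen only by the edge $\{i,j\}$, yielding $a_{ij,i}+a_{ij,j}=0$ --- is identical. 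Your version trades the clean quotient-ring shortcut for heavier machinery in part one, but everything you assert is true and the proof goes through.
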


\begin{proof}
Let $G=(V,E)$, and let $\I_G$ denote the $k$-coloring ideal of $G$.  Then, $\I_G$ is generated by vertex polynomials $\nu_i=x_i^k-1$ (for $i\in V(G)$) and edge polynomials $\eta_{ij}=(x_i^k-x_j^k)/(x_i-x_j)$ (for $(i,j)\in E(G)$).  We note that $\I_G$ has a Nullstellensatz certificate over $\K[x_1,\ldots,x_n]$ if and only if it has such a certificate over $\K[x_1,\ldots,x_n]/\lz x_1^k-1,\ldots,x_n^k-1\rz$.  Therefore, we may consider only the edge polynomials $\eta_{ij}$ and assume that degrees of variables are taken modulo $k$, that is, $x_i^k=1$ for every $i$. 

Suppose that $\{\alpha_{ij}\}$ is a Nullstellensatz certificate of degree $d$, so that $\sum_{ij\in E} \alpha_{ij} \eta_{ij}=1$.  We write $\alpha_{ij}=\sum_t \alpha_{t,ij}$, where $\alpha_{t,ij}$ is homogeneous of degree $t$.  Equating terms of like degree, we see that, for each $t\not\equiv 1\pmod k$,
\begin{align*}
\sum_{ij\in E}\alpha_{t,ij}\eta_{ij}=0,\quad \text{and~} \sum_{ij\in E, t\equiv 1}\alpha_{t,ij}\eta_{ij}=1.
\end{align*}
Hence, letting $\beta_{ij}=\sum_{t\equiv 1}\alpha_{t,ij}$, observe that $\{\beta_{ij}\}$ is a Nullstellensatz certificate with degree congruent to 1 modulo $k$.  We conclude that $N_{k,\K}(G)\equiv 1\pmod{k}$.

Now consider $k>3$ and suppose towards contradiction that there exists a Nullstellensatz certificate $\{\alpha_{ij}\}$ of degree at most 1.  By our logic above, we need only consider terms in $\alpha_{ij}$ for which the degree is 1 modulo $k$.  Suppose therefore that $\alpha_{ij}=\sum_h c_{h,ij}x_h$, so that $$\sum_{h\in V,\, ij\in E} c_{h,ij}x_h \eta_{ij}=1.$$Notice that $c_{h,ij}x_h \eta_{ij}$ can contain a constant term only when $h$ equals $i$ or $j$, in which case $x_h(x_i^{k-1})$ or $x_h(x_j^{k-1})$ equals 1.  We conclude that \begin{equation}1=\sum_{ij\in E} (c_{i,ij}+c_{j,ij})\label{ij}.\end{equation}

Observe that $c_{i,ij}x_i \eta_{ij}$ contains a term of the form $c_{i,ij}x_i^{k-2}x_j^2$.  Since $k>3$, the monomial $x_i^{k-2}x_j^2$ occurs for only one other choice of $h'$ and $i'j'$, namely $i'=i$ and $h'=j'=j$.  In order for this term to cancel in the final sum, therefore, we must have $c_{j,ij}=-c_{i,ij}$ for all $ij\in E$.  However, this contradicts (\ref{ij}).  We conclude that for $k>3$, no Nullstellensatz certificate exists of degree 1, and therefore that $N_{k,\K}(G)\ge k+1$.
\end{proof}

We observe that Thm. \ref{thm:degree} is a generalization of Lemmas 4.0.48 and 4.0.49 of \cite{susan_thesis}, which only deals with the graph-3-colorability case.

\begin{exm} Consider the following \textbf{incomplete} degree four certificate for non-3-colorability over $\mathbb{F}_2$. Observe that the coefficient for the vertex polynomial $(x_0^3 + 1)$ contains only monomials of degree zero and degree three, whereas the coefficient for the edge polynomial $(x_0^2+x_0x_2+x_2^2)$ contains only monomials of degree one or degree four. This certificate demonstrates the modular degree grouping of the monomials in the certificates, as described by Thm \ref{thm:degree}. We do not display the total certificate here due to space considerations.
\begin{align*}
1 &= (1+x_0x_2x_4+x_0x_2x_6+x_0x_3x_4+x_0x_3x_5+x_0x_4x_5+x_0x_4x_6+x_1^2x_4\\
&+x_1^2x_6+x_1x_3x_4+x_1x_3x_5+x_1x_5x_6+x_2x_3x_4+x_2x_3x_6+x_3x_5x_6+x_4x_5x_6)(x_0^3+1)\\
&+(x_1+x_3+x_4+    x_0^2x_1x_4+x_0^2x_1x_6+x_0^2x_2x_4+x_0^2x_2x_6+ x_0^2x_3x_4+x_0^2x_3x_5+x_0^2x_5x_6\\
&+x_0x_1x_3x_4+x_0x_1x_3x_6+x_0x_2x_3x_4+x_0x_2x_3x_6+x_0x_2x_4x_5+x_0x_2x_4x_6+x_0x_2x_5x_6\\
&+x_0x_3x_4x_5+x_0x_3x_4x_6+x_0x_3x_5x_6+x_0x_4x_5x_6+x_1x_3x_4x_5\\
&+x_1x_3x_4x_6+x_1x_4x_5x_6+x_2x_3x_4x_5+x_2x_3x_4x_6)(x_0^2+x_0x_1+x_1^2)\\
&+(x_1+x_3+ x_4+x_6+x_0^2x_1x_4+x_0^2x_1x_6+x_0^2x_4x_5+x_0^2x_4x_6+x_0^2x_5x_6+x_0x_1x_3x_4\\
&+x_0x_1x_3x_6+x_0x_3x_4x_5+x_0x_3x_4x_6+x_1x_3x_4x_5+x_1x_3x_4x_6+x_1x_4x_5x_6\\
&+x_3x_4x_5x_6)(x_0^2+x_0x_2+x_2^2)+\mathbf{\cdots}
\end{align*}
\end{exm}

\subsubsection{Experiments and future directions}

In Table \ref{fig:degree}, we display experimental data on minimum-degree Nullstellensatz certificates for various graph-$k$-colorability cases and various finite fields. This data was found via the high-performance computing cluster at the US Naval Academy (and the NulLa software \cite{DLMM_issac}). Observe in particular that the Nullstellensatz certificate computed by testing the complete graph $K_7$ for non-6-colorability is \textbf{not} the minimum possible degree: instead of expected minimum degree seven, the minimum-degree certificate is the next higher residue class (degree thirteen). Additionally (not presented in Table \ref{fig:degree}), we tested non-3-colorability for $K_4$ for the first 1,000 prime finite fields. The certificate degree was degree one for finite fields $\mathbb{F}_2$ and $\mathbb{F}_5$, and changed to the next highest degree (degree four) at $\mathbb{F}_7$, and the remained degree four for the next 997 primes (up to $\mathbb{F}_{7919}$). We also (not presented Table \ref{fig:degree}) tested non-4-colorability for $K_5$ for the first 1,000 primes: the minimum-degree remained five for the entire series of computations. In general, Table \ref{fig:degree} suggests that the bound $N_{k,\K}(G)\ge k+1$ for $k\ge 4$ is not tight for large $k$. This yields the following conjecture:

\begin{conjecture} For every field $\K$ and for every positive integer $m$, there exists a constant $k_0$ with the following property.  For each $k>k_0$ and $G$ a non-$k$-colorable graph, every Nullstellensatz certificate of the $k$-coloring ideal of $G$ has degree at least $mk+1$.
\end{conjecture}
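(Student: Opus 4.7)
The plan is to extend the monomial-cancellation argument of Thm.~\ref{thm:degree} inductively through the first $m$ residue classes modulo $k$. Since Thm.~\ref{thm:degree} already shows $N_{k,\K}(G)\equiv 1\pmod k$, it suffices to prove that for each $\ell\in\{0,1,\ldots,m-1\}$ there is a threshold $k_0(\ell)$ beyond which no Nullstellensatz certificate of degree exactly $\ell k+1$ exists for any non-$k$-colorable $G$ over $\K$; the conjecture then follows with $k_0=\max_\ell k_0(\ell)$, and the base case $\ell=0$ is Thm.~\ref{thm:degree} itself. One would want the thresholds $k_0(\ell)$ to depend only on $\ell$ and $\K$, not on $G$, so that the argument is uniform over all non-$k$-colorable graphs (which in general need not contain any $K_{k+1}$).

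Fixing $\ell\ge 1$ and working in $R:=\C[x_1,\ldots,x_n]/\langle x_i^k-1\rangle$ as in the proof of Thm.~\ref{thm:degree}, decompose any candidate certificate as $\alpha_{ij}=\alpha_{ij}^{(0)}+\cdots+\alpha_{ij}^{(\ell)}$ along the $\Z/k$-grading induced by total degree mod $k$, where $\alpha_{ij}^{(t)}$ collects the reduced monomials of degree $tk+1$. Since $\eta_{ij}$ sits in grade $-1$, the top layer $\sum_{ij}\alpha_{ij}^{(\ell)}\eta_{ij}$ must vanish in $R$. The goal is then to produce ``witness'' reduced monomials that are hit by only a bounded number of products $\alpha_{ij}^{(\ell)}\eta_{ij}$, yielding stringent linear relations among the top-layer coefficients analogous to the relation $c_{i,ij}+c_{j,ij}=0$ extracted in the proof of Thm.~\ref{thm:degree}, and then to propagate the obstruction downward through the layers until the $\ell=0$ contradiction is reached.

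A complementary dual attack uses the Fourier isomorphism $R\cong\C^{(\Z/k)^n}$ given by evaluation at the $k$-th roots of unity. Under this identification $\eta_{ij}$ is supported on the edge diagonal $\{\vec a:a_i=a_j\}$, so one seeks a single linear functional $L\colon R\to\C$ with $L(1)\ne 0$ and $L(\alpha\eta_{ij})=0$ for every edge $ij$ and every $\alpha$ of reduced degree at most $(m-1)k+1$; equivalently, a function on $(\Z/k)^n$ whose Fourier coefficients vanish at a prescribed family of ``low-weight'' characters on each edge diagonal. The $\Z/k$ color-permutation action on $R$ reduces the search for $L$ to a manageable representation-theoretic problem on each diagonal.

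The main obstacle is the combinatorial explosion in going from $\ell=0$ to $\ell\ge 1$: in Thm.~\ref{thm:degree} the monomial $x_i^{k-2}x_j^2$ was uniquely produced (up to the pair $(i,j)$), but for general $\ell$ a reduced monomial can arise from many distinct products $\alpha_{ij}^{(\ell)}\eta_{ij}$, and the number of coefficient-variables in $\alpha_{ij}^{(\ell)}$ scales like $\binom{n+\ell k-1}{\ell k+1}$. Taming this explosion---either by isolating sparsely covered witness monomials on the primal side or by exploiting the color-permutation symmetry to construct $L$ on the dual side---is where the real work lies, and matches the experimental observation of Table~\ref{fig:degree} that $k_0(\ell)$ appears to grow gradually rather than to jump immediately after the $\ell=0$ threshold.
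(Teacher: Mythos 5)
The statement you have addressed is a \emph{conjecture} in the paper, not a theorem: the authors offer only the experimental evidence of Table~\ref{fig:degree} (for example, the jump from the expected minimal degree~$7$ to the observed degree~$13$ for $K_7$ with $k=6$) and provide no proof whatsoever. So there is no paper argument to compare yours against.

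More to the point, your proposal is not a proof either, and you say so yourself: the final paragraph concedes that ``taming this explosion \dots\ is where the real work lies.'' Both of your proposed routes stop exactly where the mathematical content would have to appear. On the primal side, your claim that ``the top layer $\sum_{ij}\alpha_{ij}^{(\ell)}\eta_{ij}$ must vanish in $R$'' is overstated: after reducing $x_i^k\mapsto 1$, a product $\alpha_{ij}^{(\ell)}\eta_{ij}$ contributes to every degree in $\{0,k,2k,\ldots,(\ell+1)k\}$, and only the degree-$(\ell+1)k$ component is forced to vanish; the lower components mix with contributions from $\alpha^{(\ell-1)},\alpha^{(\ell-2)},\ldots$ and the constant term is where the obstruction must ultimately live. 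There is therefore no clean layer-by-layer induction, and you never exhibit even a single witness monomial for $\ell\ge 1$ that is hit by only boundedly many products, which was the entire engine of the $\ell=0$ case. On the dual side, you write down the desired properties of a functional $L$ (vanishing on $\alpha\eta_{ij}$ for all reduced $\alpha$ of degree at most $(m-1)k+1$, with $L(1)\ne 0$), but this is simply the standard linear-algebra dual of the Nullstellensatz statement; asserting that such an $L$ exists \emph{is} the conjecture, and you offer no construction and no reason one should exist, let alone uniformly over all non-$k$-colorable $G$. Your observation that the threshold $k_0(\ell)$ must be made independent of $G$ (not established merely for $K_{k+1}$) is the right thing to worry about, but it is flagged rather than resolved. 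In short, this is a reasonable research program consistent with what the paper knows, but it does not prove the conjecture and should not be presented as a proof.
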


\begin{table}[htbp]
$$\begin{array}{c|c|c||c|c|c|c}
\text{Graph}&k&\text{Possible degrees (Theorem \ref{thm:degree})}&\mathbb{F}_2&\mathbb{F}_3&\mathbb{F}_5&\mathbb{F}_7\\ \hline\hline
K_4&3&1,4,7,10,\ldots&1&-&4&4\\ \hline
K_5&4&5,9,13,\ldots&-&5&5&5\\ \hline
K_6&5&6,11,16,\ldots&6&6&-&11\\ \hline
K_7&6&7,13,19,\ldots&-&-&13&13\\ \hline
K_8&7&8,15,22,\ldots&8&\ge 15&\ge 15&-\\ \hline
K_9&8&9,17,25,\ldots&-&\ge 17&\ge 17&\ge 17\\ \hline
K_{10}&9&10,19,28,\ldots&\ge 19&-&\ge 19&\ge 19\\ \hline
K_{11}&10&11,21,31,\ldots&-&\ge 21&-&\ge 21\\
\end{array}$$
\caption{The minimum degree of Nullstellensatz certificates for complete graphs over $\mathbb{F}_p$.  Note that computations are only possible when $k$ and $p$ are relatively prime (incompatible pairs $(k,p)$ are denoted by $-$).}
\label{fig:degree}
\end{table}

\subsection{The Extra Hardness of Colorful Gr\"obner bases}
\label{extrahard}

We know it is $\NP$-hard to compute Gr\"obner bases. It is even known the problem is EXPSPACE-complete (see \cite{GG}), and the maximum degree of the basis can become very large. An upper bound is given in \cite{MayrRitscher} for the degree of a reduced Gr\"obner basis for an $r$-dimensional ideal, whose generators have degree bounded by $d$. The authors show that a Gr\"obner basis of such an ideal can have degree $\leq 2\left(\frac{1}{2}d^{n-r}+d\right)^{2^r}$. For the case of general zero-dimensional ideals, this bound reduces to $\leq 2\left(\frac{1}{2}d^n+d\right)$. In \cite{ritscher}, a lower bound of $d^n$ for zero-dimensional ideals is given by a suitable example. Finally, Lazard and Brownawell \cite{brownawell,lazard} independently proved an $n(d-1)$ bound on specialized zero-dimensional ideals that include our coloring ideals, $\I_k(G)$.

On the other hand, it is well-known that some combinatorial problems are even hard to approximate or it is hard to find partial solutions. Here we discuss how the hardness of finding suboptimal or approximate solutions to graph $k$-coloring can be translated into similar results for the computation of Gr\"obner bases, therefore showing some kind of ``robust hardness" for Gr\"obner bases computation. We will use the following theorem.


\begin{thm}[see \cite{khanna}]\label{ApproxHard}
It is $\NP$-hard to color a 3-colorable graph with 4 colors. More generally, for every $k\geq 3$ it is $\NP$-hard to color a $k$-chromatic graph with at most $k+2\left\lfloor\frac{k}{3}\right\rfloor-1$ colors.
\end{thm}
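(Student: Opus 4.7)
The statement is an inapproximability result for chromatic number, attributed to Khanna, Linial, and Safra, so my plan is to sketch a gap-preserving reduction rather than a full technical derivation. I would reduce from the promise version of graph $3$-colorability (distinguish $3$-colorable graphs from graphs with chromatic number exceeding some threshold), producing an output graph whose chromatic number is either exactly $k$ or at least $k+2\lfloor k/3\rfloor$. Any algorithm that colored a $k$-chromatic graph with fewer than $k+2\lfloor k/3\rfloor$ colors could then be composed with this reduction to decide the original promise problem.

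The central construction is a vertex-gadget $H$ with a designated set of contact vertices. The gadget must satisfy two properties: (i) every proper $4$-coloring of $H$ forces the contact vertices into one of a bounded list of label patterns, each of which decodes to exactly one of $3$ colors; and (ii) for any prescribed decoded color, $H$ admits a proper $4$-coloring realizing a compatible contact pattern. I would assemble the reduction graph by placing one copy of $H$ at each vertex of the input graph $G$ and inserting an edge-gadget between the contact sets of copies whose originals are adjacent, in such a way that any proper $4$-coloring of the combined graph forces the decoded labels at the two ends of an edge to be distinct. Completeness then follows by applying (ii) to a starting $3$-coloring of $G$; soundness follows by combining (i) with the edge-gadget constraint to extract a valid $3$-coloring of $G$ from any $4$-coloring of the output. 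For general $k\ge 3$, the gadget scales so that $4$ is replaced by $k+2\lfloor k/3\rfloor-1$ while the decoded alphabet remains of size $k$; partitioning the $k+2\lfloor k/3\rfloor-1$ colors into three classes of roughly $k/3$ accounts for the $2\lfloor k/3\rfloor$ slack in the bound.

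The principal obstacle is the explicit design and verification of the gadget $H$. Balancing completeness against soundness in chromatic-number reductions is delicate, since the adversary has substantial freedom to assign extra colors to internal vertices of $H$ while still producing a proper coloring. The original Khanna--Linial--Safra argument handles this via a carefully constructed family of graphs whose coloring structure is analyzed using combinatorial and algebraic tools such as covering codes and related incidence structures. Reproducing that analysis in detail is beyond the scope of a short proposal; I would rely on their construction as a black box at the gadget level and verify only that the high-level reduction glues together as described above.
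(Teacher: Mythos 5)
The paper does not prove Theorem~\ref{ApproxHard}; it is quoted directly from Khanna, Linial, and Safra~\cite{khanna} and used as a black-box hypothesis in Theorem~\ref{thm:stronggroebner}. So there is no in-paper argument to compare against, and a re-derivation is not expected here; citing the result, as the paper does, suffices.

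That said, two points in your sketch deserve comment. For the base case (NP-hardness of $4$-coloring a $3$-colorable graph), your gadget-with-contact-vertices description is the right flavor for the KLS construction, but since you explicitly treat the gadget as a black box, your sketch establishes nothing beyond the citation itself. More substantively, your account of the passage from $k=3$ to general $k$ is not how the argument actually runs. You describe ``scaling the gadget'' so that the palette grows to $k+2\lfloor k/3\rfloor-1$ and then ``partitioning the colors into three classes of roughly $k/3$,'' but the real reduction is a join followed by pigeonhole and does not touch the gadget at all. Given a $3$-chromatic graph $G$ that is NP-hard to $4$-color, set $t=\lfloor k/3\rfloor$ and $r=k-3t\in\{0,1,2\}$, and form the join
\[
H \;=\; G_1 \vee G_2 \vee \cdots \vee G_t \vee K_r
\]
of $t$ disjoint copies of $G$ with a clique on $r$ vertices. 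Then $\chi(H)=3t+r=k$. Because the parts are pairwise completely joined, any proper $C$-coloring of $H$ assigns pairwise disjoint color sets to the parts; if $C\le k+2t-1=5t+r-1$, the $t$ copies of $G$ together receive at most $5t-1$ colors, so by pigeonhole some copy receives at most $4$ colors, yielding a polynomial-time $4$-coloring of $G$ and a contradiction. This clean composition is exactly what produces the constant $k+2\lfloor k/3\rfloor-1$; your description of a scaled gadget and a heuristic three-way partition of the palette neither matches the argument nor clearly delivers the stated bound.
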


We now translate this theorem into a statement about Gr\"obner bases. Having additional colors to work with allows us to ignore certain vertices of our graph and color these later using our extra colors. Algebraically, this corresponds to ignoring certain variables and computing a Gr\"obner basis for the partial coloring ideal on the remaining variables.

\begin{defi}
Given a set of polynomials $\F\subseteq\mathbb{K}[x_1, \ldots, x_n]$, we say that a subset $X$ of the variables $x_1,\ldots,x_n$ is {\em independent} on $\F$ if no two variables in $X$ appear together in any element of $\F$.
\end{defi}

Clearly independent sets in our coloring ideal generators correspond to independent sets of vertices of the graph.

\begin{defi}
Define the \emph{strong $c$-partial Gr\"obner} problem as follows. Given as input, a set $\F$ of polynomials on a set $X$ of variables, output the following:
\begin{itemize}
\item disjoint $X_1,\ldots,X_b\subseteq X$, such that $b\le c$ and each $X_i$ is an independent set of variables,
\item $X'\subseteq X$, where $X'=X\backslash\left(\bigcup_i X_i\right)$ (i.e., we have taken away at most $c$ independent sets of variables),
\item $\F'\subseteq \F$ such that $\F'$ consists of all polynomials in $\F$ involving only variables in $X'$,
\item a Gr\"obner basis for $\lz\F'\rz$ over $X'$ (where the monomial order on $X$ is restricted to a monomial order on $X'$).
\end{itemize}
\end{defi}

\begin{thm}
\label{thm:stronggroebner}
Suppose that we are working over a polynomial ring $\K[x_1,\ldots,x_n]$ under some elimination order on the variables (such as lexicographic order). Assume furthermore that $\K$ is either a finite field or the field of rational numbers.

Let $k\ge 3$ be an integer, and set $c=2\left\lfloor\frac{k}{3}\right\rfloor-1$. Unless $\text{P}=\NP$, there is no polynomial-time algorithm $\A$ that solves the strong $c$-partial Gr\"obner problem (even if we restrict to sets of polynomials of degree at most $k$).
\end{thm}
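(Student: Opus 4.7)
The plan is to give a polynomial-time Turing reduction from the (search version of the) approximate coloring problem in Theorem~\ref{ApproxHard} to the strong $c$-partial Gr\"obner problem. Given a $k$-chromatic graph $G=(V,E)$ on $n$ vertices, I form its $k$-coloring generators $\F_G$, which are polynomially many in $n$ and of degree at most $k$, and feed them to the hypothesized polynomial-time algorithm $\A$. The algorithm returns independent sets $X_1,\dots,X_b$ with $b\le c$, a surviving variable set $X'$, the restricted generators $\F'$, and a Gr\"obner basis $\G'$ for $\lz\F'\rz$ under an elimination order on $X'$.

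First, I would check that the graph-theoretic and algebraic notions line up. Since each vertex polynomial $\nu_i$ uses a single variable and each edge polynomial $\eta_{ij}$ uses exactly $\{x_i,x_j\}$, a set independent on $\F_G$ corresponds precisely to an independent set of vertices in $G$, and $\F'$ is exactly the set of vertex and edge polynomials of the induced subgraph $G[X']$, so $\lz\F'\rz=\I_k(G[X'])$. Because $G$ is $k$-chromatic, it is $k$-colorable, hence $G[X']$ is $k$-colorable, so by the characterization of colorability in Section~\ref{ExpectedHardness}'s opening theorem (Bayer's theorem for the coloring ideal) the ideal $\lz\F'\rz$ has a common root and the Gr\"obner basis $\G'$ is not $\{1\}$. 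Using the elimination structure of $\G'$ together with zero-dimensionality, I would extract a single point of the variety in polynomial time by the standard triangular procedure: factor the univariate element of $\G'$ in the least variable (which divides $x^k-1$) in an appropriate cyclotomic extension of $\K$, pick a root, substitute, and iterate. Each coordinate is a $k$-th root of unity, hence encodes one of the $k$ colors, giving an explicit proper $k$-coloring of $G[X']$. Assigning one fresh color to each of $X_1,\dots,X_b$ extends this to a proper $(k+c)$-coloring of $G$ in overall polynomial time, contradicting Theorem~\ref{ApproxHard} unless $\text{P}=\NP$.

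The main obstacle is ensuring that the back-substitution step really runs in polynomial time; this is exactly where the hypothesis that $\K$ is a finite field or $\Q$ enters, since polynomial-time univariate factorization is available in both cases (Berlekamp/Cantor--Zassenhaus over finite fields, LLL-based factorization over $\Q$), and the cyclotomic extensions needed have degree bounded by $\varphi(k)\le k$. A secondary check is that the restriction of an elimination order on $X$ to $X'$ is itself an elimination order, so that the triangular extraction applies to $\G'$; this is immediate from the definition of an elimination order. One also verifies the degenerate cases: if $b=0$ then $\G'$ is a Gr\"obner basis for the full ideal $\I_k(G)$ and yields a genuine $k$-coloring directly, and if some $X_i$ is empty the argument is unaffected.
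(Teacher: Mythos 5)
Your proposal takes essentially the same approach as the paper: feed $\F_G$ to the hypothesized algorithm, use the elimination structure of the returned Gr\"obner basis to extract a proper $k$-coloring of the surviving induced subgraph (the paper isolates this step as Lemma~\ref{FindSolution}), and extend it to a $(k+c)$-coloring by assigning a fresh color to each deleted independent set, contradicting Theorem~\ref{ApproxHard}. The only substantive difference is that you spell out the polynomial-time univariate root extraction (Berlekamp/Cantor--Zassenhaus over finite fields, LLL over $\Q$, bounded cyclotomic degree), a detail the paper leaves informal with its assumption that roots of univariate polynomials can be found quickly.
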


The following lemma will be useful in our proof.



\begin{lemma} \label{FindSolution}
Suppose that we are given a Gr\"obner basis $\G$ for the $k$-coloring ideal $\I_G$ of a graph $G$, with respect to a given elimination order. Assuming the variety $\mathcal{V}(\I_G)$ is non-empty, there is an algorithm that finds some solution $x\in\mathcal{V}(\I_G)$ in time polynomial in the encoding length of $\G$ and $k$, and therefore identifies a $k$-coloring of $G$.
\end{lemma}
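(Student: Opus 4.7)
The plan is to use the Elimination Theorem to read off a solution from $\G$ by back-substitution, which is the standard approach for solving polynomial systems given a Gr\"obner basis with respect to an elimination order. Orient the elimination order so that $x_1 > x_2 > \cdots > x_n$; then by the Elimination Theorem, $\G_i := \G \cap \K[x_i, x_{i+1}, \ldots, x_n]$ is a Gr\"obner basis for the elimination ideal $\I_G \cap \K[x_i, \ldots, x_n]$. The algorithm constructs a point $(a_1, \ldots, a_n) \in \mathcal{V}(\I_G)$ iteratively, starting from $i = n$ and decreasing $i$ down to $1$.

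At step $i$, I would assume inductively that the values $(a_{i+1}, \ldots, a_n)$ already chosen lie in $\mathcal{V}(\I_G \cap \K[x_{i+1}, \ldots, x_n])$. Substituting $x_j = a_j$ for $j > i$ into each polynomial of $\G_i$ yields a family of univariate polynomials in $x_i$. Since $x_i^k - 1 \in \I_G$, the admissible values of $a_i$ are confined to the $k$ distinct $k$-th roots of unity, so I would work in the extension $\K[\omega]$ where $\omega$ is a primitive $k$-th root of unity; for $\K$ finite or rational, arithmetic in this extension has bit-complexity polynomial in the input. The algorithm then evaluates each of the substituted polynomials at each of the $k$ candidate roots of unity and takes any $a_i$ that makes them all vanish.

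The heart of the argument is showing that such an $a_i$ always exists. Here I would appeal to zero-dimensionality of $\I_G$: because $\mathcal{V}(\I_G)$ is finite, every coordinate projection $\pi(\mathcal{V}(\I_G))$ is a finite, hence Zariski-closed, subset of affine space, so it equals its own closure. By the Closure Theorem, that closure is exactly $\mathcal{V}(\I_G \cap \K[x_{i+1}, \ldots, x_n])$. In particular the inductive partial solution lifts to some full point of $\mathcal{V}(\I_G)$, whose $x_i$-coordinate is a valid choice. After $n$ steps, each $a_i$ equals $\omega^{c_i}$ for some $c_i \in \{0, \ldots, k-1\}$; assigning color $c_i$ to vertex $i$ gives a proper $k$-coloring, because vanishing of $\eta_{ij}$ at $(a_1, \ldots, a_n)$ combined with $a_i^k = a_j^k = 1$ forces $a_i \neq a_j$ and hence $c_i \neq c_j$.

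For the runtime, there are $n$ outer iterations, each performing $O(k \cdot |\G|)$ polynomial evaluations, all of bit-cost polynomial in the encoding length of $\G$ and in $k$, so the algorithm runs in the claimed polynomial time. The main obstacle is the existence claim in the previous paragraph: the general Extension Theorem requires side conditions on the leading coefficients of the generators (nonvanishing along the partial solution), which need not hold for arbitrary partial solutions, and zero-dimensionality of $\I_G$ is precisely what lets these conditions be sidestepped in our setting.
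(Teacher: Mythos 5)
Your proposal follows the same back-substitution strategy as the paper's proof: use the elimination order, solve for the last variable among the $k$ $k$th roots of unity, and extend variable by variable. Where the paper simply asserts that ``the Elimination Theorem guarantees that each partial solution can be extended,'' you correctly observe that extension is not automatic from the Elimination Theorem alone (the Extension Theorem carries nonvanishing hypotheses on leading coefficients), and you supply the right justification: since $\I_G$ is zero-dimensional, $\mathcal{V}(\I_G)$ is finite, so each coordinate projection is already Zariski-closed, and by the Closure Theorem it coincides with the variety of the corresponding elimination ideal; hence every partial solution lifts. This is the same algorithm, but your argument for the lifting step is more careful than the paper's citation and actually closes the logical gap; the runtime accounting and the translation of roots of unity into colors match the paper as well.
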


\begin{proof}
We assume that we are able to find the roots of univariate polynomials quickly (to any desired level of accuracy). Because the roots of the system are $k$th roots of unity, when we solve the first univariate polynomial (on the last variable in the order) we have only $k$ choices. As we back substitute at each new polynomial in $\G$, we have only $k$ possible solutions again, and may therefore find a solution in polynomial time. The Elimination Theorem guarantees that each partial solution can be extended in this manner to a complete solution. (See \cite[Chap. 3]{coxetal}.)
\end{proof}

\begin{proof}[Proof of Theorem \ref{thm:stronggroebner}]
The proof is by contradiction. Let $G=(V, E)$ be a $k$-colorable graph and assume such a polynomial-time algorithm $\A$ exists. We will give a method for producing a proper $(k+c)$-coloring of $G$.  This contradicts Theorem \ref{ApproxHard} under the assumption that $\text{P}\neq\NP$, as mentioned above.

Let us apply the algorithm $\A$ to our coloring polynomials $\F_G$ for the graph $G$, giving us a Gr\"obner basis $\G$. Note that the input consists of $|V|+|E|$ polynomials with degree $\leq k$ and length $\leq k$. Thus, $\F_G$ has polynomial size in $k$ and the encoding length of $G$, and by assumption $\A$ terminates in time which is polynomial in both of these quantities.

Observe that the variables in $\F_G$ correspond to vertices of $G$, and an independent set of variables corresponds to an independent set of vertices. Assume that the independent sets of variables which were ignored by $\A$ are $X_1,X_2,\ldots,X_b$ for $b\le c$.  Let $I_1,I_2,\ldots,I_b$ be the corresponding independent sets of vertices.  The Gr\"obner basis $\G$ corresponds to proper $k$-colorings of $G'=G\backslash \left(\cup_i I_i\right)$.  Therefore, Lemma \ref{FindSolution} implies that we can identify some proper coloring of $G'$ using the colors $1,\ldots,k$. Note that in order to apply this lemma, we must be working with an elimination order over our restricted polynomial ring; this is true since the restriction of an elimination order to a smaller set of variables is also an elimination order.

Now color the independent sets $I_1,\ldots,I_b$ in the colors $k+1,\ldots,k+b$. This gives us a proper coloring of $G$ using at most $$k+b\le k+c=k+2\left\lfloor\frac{k}{3}\right\rfloor-1$$colors. By Theorem \ref{ApproxHard}, this is impossible to construct in polynomial time, giving us a contradiction, as desired.
\end{proof}

Theorem \ref{thm:stronggroebner} demonstrates how results on coloring graphs translate effectively to results on Gr\"obner bases.  For reference, a weaker result may be proven without recourse to the full power of the coloring ideal.

\begin{defi}
Define the \emph{weak $c$-partial Gr\"obner} problem as follows. Given, as input, a set $\F$ of polynomials on a set $X$ of variables, output the following:
\begin{itemize}
\item $X'\subseteq X$ such that $|X'|\ge |X|-c$,
\item $\F'\subseteq \F$ such that $\F'$ consists of all polynomials in $\F$ involving only variables in $X'$,
\item a Gr\"obner basis for $\lz\F'\rz$ over $X'$ (where the monomial order on $X$ is restricted to a monomial order on $X'$).
\end{itemize}
\end{defi}

\begin{thm}
\label{thm:weakgroebner}
For constant $c$, there is no polynomial-time algorithm to solve the weak $c$-partial Gr\"obner problem, unless P$=$NP. This holds even if we restrict to sets of polynomials of degree at most 3.
\end{thm}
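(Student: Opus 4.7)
The plan is to reduce graph 3-colorability to the weak $c$-partial Gr\"obner problem, using the standard 3-coloring polynomials $\nu_i = x_i^3 - 1$ and $\eta_{ij} = x_i^2 + x_i x_j + x_j^2$, which have degree at most $3$. The obstacle to a naive reduction is that the algorithm $\A$ is free to discard up to $c$ variables, i.e., up to $c$ vertices of the underlying graph, and deleting even a single vertex can destroy non-3-colorability (take $K_4$). The key idea is therefore an amplification step: given an input graph $G$, I would construct $G^{*}$ as the disjoint union of $c+1$ vertex-disjoint copies of $G$. Since $c$ is a fixed constant, $G^{*}$ has size polynomial in $|G|$, and plainly $G^{*}$ is 3-colorable if and only if $G$ is.

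Next, feed $\F_{G^{*}}$ to the hypothetical polynomial-time algorithm $\A$. It returns a subset $X' \subseteq X$ with $|X \setminus X'| \le c$, the induced subsystem $\F' \subseteq \F_{G^{*}}$, and a Gr\"obner basis $\G'$ of $\lz \F' \rz$. Let $G^{**}$ be the induced subgraph of $G^{*}$ on $X'$. Because we removed at most $c$ vertices from $c+1$ vertex-disjoint copies of $G$, at least one full copy of $G$ survives in $G^{**}$ as an induced subgraph. Next, decide 3-colorability of $G^{**}$ in linear time by scanning $\G'$ for a nonzero constant: such an element is present precisely when $1 \in \lz \F' \rz$, i.e., precisely when $G^{**}$ is not 3-colorable (no appeal to Lemma \ref{FindSolution} is even needed here, since we only need the yes/no decision).

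If $1 \in \lz \F' \rz$, then $G^{**}$ is not 3-colorable, and since an intact copy of $G$ sits inside $G^{**}$, $G$ cannot be 3-colorable either. Conversely, if $1 \notin \lz \F' \rz$, then $G^{**}$ admits a proper 3-coloring whose restriction to any intact copy yields a 3-coloring of $G$. Either way we decide 3-colorability of $G$ in polynomial time, contradicting $\text{P} \neq \NP$. The only step that requires any care is the amplification itself, which converts a constant deletion budget into an effectively zero deletion budget; once that is in place, the remaining work is routine bookkeeping about encoding sizes and the trivial observation that $1 \in \lz \F' \rz$ is detectable from any Gr\"obner basis by inspection.
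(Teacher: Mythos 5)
Your proof is correct and rests on the same central idea as the paper's: amplification via $c+1$ vertex-disjoint copies so that at least one copy survives the deletion of $c$ variables. The paper, however, runs the reduction at the level of \emph{arbitrary} degree-$3$ systems $\F$: it invokes Lemma~\ref{lemma:disjoint} (reduced Gr\"obner basis of a disjoint union is the union of the reduced Gr\"obner bases) to extract from $\G'$ a reduced Gr\"obner basis for the intact copy $\F_i=\F$, and only \emph{then} appeals to NP-hardness of Gr\"obner basis computation for degree-$3$ systems. You instead specialize at the outset to $3$-coloring ideals, so you only need the yes/no question ``is $1\in\lz\F'\rz$?'', which is readable off any Gr\"obner basis $\G'$ by scanning for a nonzero constant, with no appeal to Lemma~\ref{lemma:disjoint} or to reduction of $\G'$. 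This makes your argument somewhat shorter and more self-contained; the paper's version recovers strictly more information (a full Gr\"obner basis for $\lz\F\rz$), which is not needed for the stated theorem but is the natural byproduct of its more general setup.

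One small wrinkle in the write-up: in the case $1\in\lz\F'\rz$ you justify ``$G$ not $3$-colorable'' by saying that an intact copy of $G$ sits inside $G^{**}$, but that observation supports only the \emph{converse} direction (a $3$-coloring of $G^{**}$ restricts to one of $G$). The right justification here is that $G^{**}$ is a disjoint union of induced subgraphs of $G$, so if $G$ were $3$-colorable then so would every piece of $G^{**}$ and hence $G^{**}$ itself; contrapositive gives what you want. The conclusion is true and the fix is one line, so this does not affect the validity of the argument.
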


Our proof will use the following lemma.

\begin{lemma}
\label{lemma:disjoint}
Suppose that $\F_1,\F_2,\ldots,\F_m$ are sets of polynomials on disjoint sets of variables (that is, no variable appears both in a polynomial of $\F_i$ and in a polynomial of $\F_j$).  Then, the reduced Gr\"obner basis of $\lz\cup_i\F_i\rz$ is the union of the reduced Gr\"obner bases for the individual $\lz\F_i\rz$.
\end{lemma}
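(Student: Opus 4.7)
The plan is to verify the claim by checking Buchberger's criterion on the union and then verifying reducedness, both of which become almost automatic once we exploit the disjoint-variable hypothesis. Let $G_i$ denote the reduced Gr\"obner basis of $\lz\F_i\rz$ with respect to the restriction of the given monomial order to the variables appearing in $\F_i$, and set $G = \bigcup_i G_i$. Since $G_i \subseteq \lz\F_i\rz$ and $\F_i \subseteq \lz G_i\rz$, we immediately have $\lz G \rz = \lz \cup_i \F_i \rz$, so the generation claim is straightforward.

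First I would argue that $G$ is a Gr\"obner basis via Buchberger's criterion. Take any two elements $g, g' \in G$. If $g, g' \in G_i$ for the same $i$, then $S(g,g')$ reduces to $0$ modulo $G_i \subseteq G$ by the hypothesis that $G_i$ is a Gr\"obner basis. If instead $g \in G_i$ and $g' \in G_j$ with $i \neq j$, then $\LM(g)$ and $\LM(g')$ involve disjoint sets of variables, hence are coprime monomials; by Buchberger's first (coprimality) criterion, $S(g,g')$ reduces to $0$ modulo $\{g,g'\} \subseteq G$. All S-polynomials reduce to zero, so $G$ is a Gr\"obner basis of $\lz G\rz = \lz\cup_i\F_i\rz$.

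Next I would verify that $G$ is reduced, meaning every element has leading coefficient $1$ and no term of any $g \in G$ is divisible by $\LM(g')$ for any other $g' \in G$. The leading-coefficient condition is inherited from the $G_i$. For the divisibility condition, if $g \in G_i$ and $g' \in G_i$ are both in the same block, the condition holds because $G_i$ is reduced. If $g \in G_i$ and $g' \in G_j$ with $i \neq j$, then every term of $g$ involves only variables from the $i$-th block while $\LM(g')$ involves only variables from the $j$-th block; thus $\LM(g')$ can divide a term of $g$ only if $\LM(g')$ is the constant monomial $1$. But if some $G_j$ contains an element with leading monomial $1$, then $\lz\F_j\rz$ is the whole polynomial ring and its reduced Gr\"obner basis is $\{1\}$; in that degenerate situation the same holds for $\lz\cup_i\F_i\rz$ and one checks the statement directly.

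I do not anticipate a serious obstacle: the only subtlety is handling the edge case where some $\lz\F_i\rz$ equals the full ring, which is disposed of in one line as above. Once that is out of the way, the whole argument rests on the two clean structural facts that disjoint variable sets yield coprime leading monomials (giving Buchberger's criterion for free across blocks) and that leading monomials in one block cannot divide terms living in another block. Uniqueness of reduced Gr\"obner bases for a fixed monomial order then pins down $G$ as \emph{the} reduced Gr\"obner basis of $\lz\cup_i\F_i\rz$.
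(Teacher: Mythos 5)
Your proof is correct and takes a genuinely different route from the paper. The paper establishes that $\G=\cup_i\G_i$ is a Gr\"obner basis via a chain of equalities on leading-term ideals, $\mathcal{L}(\lz\cup_i\F_i\rz)=\mathcal{L}(\sum_i\lz\F_i\rz)=\sum_i\mathcal{L}(\lz\F_i\rz)=\sum_i\mathcal{L}(\G_i)=\mathcal{L}(\G)$, a structural argument whose crucial middle equality is precisely where the disjoint-variable hypothesis enters, though the paper does not spell out why that step holds. You instead verify Buchberger's criterion directly: S-pairs within a block reduce because $\G_i$ is already a Gr\"obner basis, and S-pairs across blocks reduce by the coprimality criterion since disjoint variable sets give coprime leading monomials. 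This is more computational and arguably more self-contained, since the coprimality criterion is standard and needs no further justification. Both proofs then argue reducedness the same way, observing that a leading monomial from one block cannot divide a nonconstant term from another block. You additionally flag the degenerate case where some $\lz\F_j\rz$ is the unit ideal, where $\cup_i\G_i$ contains $1$ but may contain other elements from nontrivial blocks and hence fails to literally be the reduced basis $\{1\}$; the paper omits this, and in its intended application (where every $\F_i$ is a copy of the same set $\F$) the issue never arises. On balance, your Buchberger-based argument trades the paper's terse ideal-theoretic identity for a transparent, step-by-step verification, and it surfaces an edge case the paper glosses over.
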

\begin{proof}
Let $\G_i$ be the reduced Gr\"obner bases for the $\lz\F_i\rz$, and set $\G:=\cup_i \G_i$. For a set $S$ of polynomials, we use $\mathcal{L}(S)$ to denote the ideal generated by the leading terms of elements of $S$. Observe that $$\mathcal{L}\left(\lz\cup_i \F_i\rz\right)=\mathcal{L}\left(\sum_i{\lz\F_i\rz}\right)=\sum_i{\mathcal{L}\left(\lz\F_i\rz\right)}=\sum_i{\mathcal{L}(\G_i)}=\mathcal{L}\left(\cup_i \G_i\right)=\mathcal{L}(\G)~.$$ Hence, $\G$ is a Gr\"obner basis of $\cup_i \lz\F_i\rz$.

To see that it is the (unique) reduced Gr\"obner basis, note that for all $i\in\{1, \ldots, m\}$, no term of an element of $\G_i$ is divided by a leading term of an element of $\G_j$, where $i\neq j$, and since $\G_i$ is reduced, the same holds for leading terms in $\G_i$. This suffices to show that $\G$ is reduced.
\end{proof}

\begin{proof}[Proof of Theorem \ref{thm:weakgroebner}.]
Suppose that there exists an algorithm $\A$ for $c$-partial Gr\"obner that runs in time at most $p(s)$, where $s$ is the size of the input.  Let $\F$ be a system of polynomials in $\K[x_1,\ldots,x_n]$ with input size $s$, such that the degree of every polynomial in $\F$ is at most 3. We show how to use $\A$ to compute a Gr\"obner basis for $\lz\F\rz$ in polynomial time, which will lead to a contradiction.

Construct copies $\F_1,\F_2,\ldots,\F_{c+1}$ of $\F$ on disjoint sets of variables, so that $\F_i$ includes polynomials over the variables $x_{i,1},x_{i,2},\ldots,x_{i,n}$.  The size of $\cup_i \F_i$ is obviously $(c+1)s$.  Now run $\A$ on $\cup_i \F_i$, removing at most $c$ variables from $\cup_i \F_i$. In the process, we remove certain polynomials from $\F_i$ to yield sets $\F'_i$ of polynomials.  The output of $\A$ is a Gr\"obner basis $\G$ for $\left\lz\cup_i \F'_i\right\rz$.

Now, since there are $c+1$ disjoint sets of variables $\{x_{i,1},x_{i,2},\ldots,x_{i,n}\}$, there must exist at least one $i$ such that we have not removed any variable in $\{x_{i,1},x_{i,2},\ldots,x_{i,n}\}$.  For this value of $i$, we have $\F'_i=\F_i$.  Transforming $\G$ to a reduced Gr\"obner basis is routine and can be performed in polynomial time.  Applying Lemma \ref{lemma:disjoint}, we see that the restriction of $\G$ to $\{x_{i,1},x_{i,2},\ldots,x_{i,n}\}$ gives a reduced Gr\"obner basis for $\F'_i=\F_i$. This immediately gives us a reduced Gr\"obner basis for $\lz\F\rz$.

Observe that $(c+1)s$ is the size of our input $\cup_i \F_i$ to $\A$.  Therefore, the time required by our algorithm is at most $p((c+1)s)\le (c+1)^{\deg(p)} p(s)$.  Since $\F$ was chosen arbitrarily, this implies that for every family of polynomials of input size $s$, a Gr\"obner basis can be found in polynomial time at most $(c+1)^{\deg(p)} p(s)$.  However, since 3-coloring is NP-hard, the general problem of finding a Gr\"obner basis cannot be performed in polynomial time, even if we assume that every $f\in \F$ has degree at most 3.  Thus we have a contradiction, and conclude that the algorithm $\A$ cannot exist.
\end{proof}

Comparing Theorems \ref{thm:stronggroebner} and \ref{thm:weakgroebner}, we see that the latter allows us to remove only a constant number of individual variables, not a constant number of independent sets. Furthermore, the set of polynomials constructed in Theorem \ref{thm:weakgroebner} is \emph{disconnected}, according to the following Definition \ref{defi:connected}, while the set of polynomials constructed in Theorem \ref{thm:stronggroebner} is \emph{connected}.  It appears more natural to consider connected sets of polynomials, which occur in many applications.

\begin{defi}
\label{defi:connected}
We say that a set $\F$ of polynomials is \emph{disconnected} if we can partition $\F$ into $\F_1,\F_2$ such that the variables for $\F_1$ and $\F_2$ are disjoint. Otherwise, we say that $\F$ is \emph{connected}.
\end{defi}

\section{Gr\"obner Bases for Chordal Graphs}\label{sec:chordal}

Even though graph coloring is hard for general graphs, the problem can be solved in linear time for chordal graphs (see e.g.\ \cite{Golumbic}). We develop an efficient algorithm that takes advantage of the structure of chordal graphs to compute a Gr\"obner basis for the $k$-coloring ideal $\I_G$ of a given chordal graph $G$. The monomial order we consider will depend upon the choice of $G$.

Recall that a graph $G=(V, E)$ is {\em chordal} if every cycle of length more than 3 has a chord, or equivalently, every induced cycle in the graph has length 3. A vertex $v\in V$ is {\em simplicial} if its neighbors form a clique. A graph is \emph{recursively simplicial} if it contains a simplicial vertex $v$ such that $G[V\setminus\{v\}]$ is recursively simplicial (the graph on zero vertices is defined to be recursively simplicial). If $G$ is recursively simplicial, there is an ordering in which the vertices are removed such that each vertex is simplicial at the time of removal. This order is called a {\em perfect elimination ordering}.  Therefore, a recursively simplicial graph $G$ is constructed by adding vertices according to the reverse perfect elimination ordering, such that each vertex is simplicial when added.

\begin{prop}[\cite{FulkersonGross65}]
Let $G=(V, E)$ be a graph. Then $G$ is chordal if and only if it is recursively simplicial.
\end{prop}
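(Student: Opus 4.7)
The plan is to prove the two directions of this classical biconditional separately, with the ``recursively simplicial $\Rightarrow$ chordal'' direction being routine and the converse requiring the real work via the existence of a simplicial vertex.

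For the easy direction, suppose $G$ admits a perfect elimination ordering $v_1, v_2, \ldots, v_n$, and assume toward contradiction that $G$ contains an induced cycle $C$ of length at least $4$. I would let $v_i$ be the vertex of $C$ whose index in the perfect elimination ordering is smallest. At the moment $v_i$ is removed, all other vertices of $C$ are still present, and by the definition of the perfect elimination ordering the neighbors of $v_i$ in the current graph form a clique. In particular, the two neighbors of $v_i$ along $C$ are adjacent. This chord contradicts the assumption that $C$ is induced of length at least $4$.

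For the converse direction, I would induct on $|V|$. The base case $|V|=0$ is vacuous. For the inductive step it suffices to show that every chordal graph on at least one vertex contains a simplicial vertex $v$, because the induced subgraph $G[V \setminus \{v\}]$ is still chordal (chordality is preserved under taking induced subgraphs) and hence recursively simplicial by induction, so prepending $v$ to its perfect elimination ordering works.

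The main obstacle is therefore the classical lemma that every chordal graph contains a simplicial vertex. I would prove this by induction as well, showing the sharper claim (due to Dirac) that any chordal graph that is not a clique contains two non-adjacent simplicial vertices. If $G$ is complete every vertex is trivially simplicial. Otherwise, I would pick two non-adjacent vertices $a, b$ and a minimal $a$-$b$ separator $S$, and look at the connected components $A$ and $B$ of $G \setminus S$ containing $a$ and $b$ respectively. Using chordality, I would argue that $S$ induces a clique: if $s, s' \in S$ were non-adjacent, then minimality of $S$ would give paths from $s$ to $s'$ through $A$ and through $B$, which can be shortened (using chordality on the resulting cycle of length at least $4$) to produce a contradiction. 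Then by the inductive hypothesis applied to $G[A \cup S]$, which is chordal and not a clique (since $a \notin S$ has no edge to $b$, and $S$ is nontrivial), there is a simplicial vertex of $G[A \cup S]$ lying in $A$; because $A$'s only connections to the rest of $G$ pass through $S$, such a vertex is simplicial in $G$ itself. The symmetric argument gives a simplicial vertex in $B$, and these two are non-adjacent.

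Combining the two directions completes the proof. The only delicate point is the separator argument within the simplicial-vertex lemma; everything else is bookkeeping with induced subgraphs and the perfect elimination order.
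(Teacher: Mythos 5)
The paper states this proposition with a citation to Fulkerson and Gross and supplies no proof of its own, so there is nothing in the paper to compare against line by line; your blind reconstruction is exactly the classical Dirac argument, and it is the right one. The forward direction (perfect elimination ordering $\Rightarrow$ chordal, via the first-removed vertex of a long induced cycle) is correct. The converse correctly reduces to the lemma that every chordal graph has a simplicial vertex, and the plan to prove the stronger statement (a non-complete chordal graph has two non-adjacent simplicial vertices) via a minimal $a$-$b$ separator $S$ is the standard route; the argument that $S$ is a clique, by concatenating two shortest $s$-$s'$ paths through the components $A$ and $B$ and observing the resulting cycle of length at least $4$ is chordless, is sound.

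The one place the write-up slips is the parenthetical claiming $G[A\cup S]$ is ``chordal and not a clique (since $a\notin S$ has no edge to $b$, and $S$ is nontrivial).'' But $b\notin A\cup S$, so this says nothing about whether $G[A\cup S]$ is complete, and in fact it can be complete (e.g.\ $A=\{a\}$ with $a$ adjacent to every vertex of $S$). This does not sink the proof: if $G[A\cup S]$ is a clique, then every vertex of it, in particular $a$, is simplicial in $G[A\cup S]$ and hence in $G$; if it is not a clique, the inductive hypothesis gives two non-adjacent simplicial vertices of $G[A\cup S]$, at least one of which lies in $A$ because $S$ itself is a clique. That simplicial vertex in $A$ is then simplicial in $G$, as you observe, since all of its $G$-neighbors lie in $A\cup S$. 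The case distinction should be made explicit rather than asserting ``not a clique''; with that fix the argument is complete and correct.
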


For a graph $G=(V, E)$ and vertex $v\in V$, we use the notation $\mathcal{N}(v)=\{w\in V:(v, w)\in E(G)\}$ to denote the {\em neighborhood} of $v$ in $G$. If $U\subseteq V$ is a subset of the vertices which forms a clique in $G$, then $$G^{+U}:=\big(V\cup\{n+1\}, E\cup\{(j, n+1): j\in U\}\big)$$ is the graph obtained by adding a new vertex and connecting it to all $u\in U$. This operation is the inverse of deleting a simplicial vertex of $G$.

Our Gr\"obner basis algorithm will build up a chordal graph one vertex at a time, according to the reverse elimination order. Each newly added vertex will add a polynomial to the Gr\"obner basis. At any point, having constructed the graph $G'\subseteq G$, the set of polynomials added will form a Gr\"obner basis for the coloring ideal of $G'$. 

\subsection{Preliminaries}

Recall the following definitions. The {\em $k$th elementary symmetric polynomial} $\sigma_k(x_1, \ldots, x_n)$ over $n$ variables is given by $$\sigma_k(x_1, \ldots, x_n):=\sum_{1\leq j_1<\cdots<j_k\leq n}{x_{j_1}\cdots x_{j_k}}\ \ .$$
The {\em $k$th complete homogeneous symmetric polynomial} $S_k(x_1, \ldots, x_n)$ over $n$ variables is given by
$$S_k(x_1, \ldots, x_n):=\sum_{1\leq j_1\leq\cdots\leq j_k\leq n}{x_{j_1}\cdots x_{j_k}}\ \ .$$Note that both polynomials are degree-$k$-homogeneous, but the monomials of $\sigma_k$ are by definition square-free, while $S_k$ can contain higher powers of a variable.

\begin{lemma}
\label{lemma:AugmentingPolynomial}
For a positive integer $k$ , let $\z_1,\z_2,\ldots,\z_k$ be the $k$th roots of unity in some order.  Then, for every $k>r$, we have $S_{k-r}(\z_1,\z_2,\ldots,\z_r,x)=(x-\z_{r+1})(x-\z_{r+2})\cdots (x-\z_k)$.
\end{lemma}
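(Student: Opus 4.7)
The plan is to prove that both sides of the identity are monic polynomials in $x$ of degree $k-r$ that share the $k-r$ distinct roots $\zeta_{r+1}, \ldots, \zeta_k$, hence must coincide. The right-hand side is manifestly such a polynomial. For the left-hand side, reading off the definition of $S_{k-r}(\zeta_1,\ldots,\zeta_r,x)$, the monomial $x^{k-r}$ arises only from the tuple that selects $x$ every time, giving leading coefficient $1$, so the left-hand side is monic of degree exactly $k-r$ in $x$.

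The substantive step is to check that $S_{k-r}(\zeta_1, \ldots, \zeta_r, \zeta_j) = 0$ for each $r < j \leq k$. For this I would invoke the standard generating function
$$\sum_{d \geq 0} S_d(y_1, \ldots, y_m)\, t^d \;=\; \prod_{i=1}^{m} \frac{1}{1 - y_i t}$$
with $(y_1, \ldots, y_{r+1}) = (\zeta_1, \ldots, \zeta_r, \zeta_j)$, together with the factorization $\prod_{i=1}^{k}(1-\zeta_i t) = 1 - t^k$ coming from the $\zeta_i$ enumerating all $k$th roots of unity. Multiplying numerator and denominator by the $k-r-1$ missing factors converts the generating function into
$$\sum_{d \geq 0} S_d(\zeta_1, \ldots, \zeta_r, \zeta_j)\, t^d \;=\; \frac{P(t)}{1-t^k}, \qquad P(t)\;:=\;\prod_{r < i \leq k,\; i \neq j}(1-\zeta_i t),$$
where $\deg P = k-r-1$. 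Extracting the coefficient of $t^{k-r}$ from $P(t)\sum_{m \geq 0} t^{mk}$ then yields $0$: the $m=0$ term vanishes because $\deg P < k-r$, and for $m \geq 1$ the exponent $k-r-mk$ is negative (using $r \geq 1$). Combining this with the monic-degree observation above, the left-hand side and right-hand side are two monic polynomials of the same degree agreeing at $k-r$ distinct points, so they are equal.

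I do not expect a serious obstacle, but there is one subtle point worth flagging: the identity fails for $r = 0$, since $S_k(x) = x^k$ while $(x-\zeta_1)\cdots(x-\zeta_k) = x^k - 1$. Correspondingly, the degree inequality $\deg P < k-r$ degenerates to equality and the $m=1$ contribution ceases to be forced to zero. Thus the argument above tacitly requires $r \geq 1$, which appears to be the intended hypothesis, and is the regime in which the lemma will be used to grow the Gr\"obner basis one simplicial vertex at a time.
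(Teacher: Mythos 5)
Your proof is correct and takes a genuinely different route from the paper's. The paper reduces to the identity $S_{k-r}(\zeta_1,\ldots,\zeta_r,x)\cdot\prod_{i=1}^r(x-\zeta_i)=x^k-1$ and proves it by running two parallel recursions: the alternating convolution $\sum_{i=0}^d(-1)^{d-i}\sigma_i S_{d-i}=0$ satisfied by the numbers $S_d(\zeta_1,\ldots,\zeta_r)$, and Vieta's relations $\sum_{i=0}^d\sigma_i(\zeta_1,\ldots,\zeta_r)\,\sigma_{d-i}(\zeta_{r+1},\ldots,\zeta_k)=\sigma_d(\zeta_1,\ldots,\zeta_k)=0$ for $1\le d\le k-1$, satisfied (after a sign twist) by $(-1)^d\sigma_d(\zeta_{r+1},\ldots,\zeta_k)$; matching recursions and the base case $d=0$ yields $S_d(\zeta_1,\ldots,\zeta_r)=(-1)^d\sigma_d(\zeta_{r+1},\ldots,\zeta_k)$, from which the product telescopes. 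You instead note that both sides are monic of degree $k-r$ in $x$ and verify directly that the left side vanishes at each $\zeta_j$ with $r<j\le k$, by combining the generating function $\sum_d S_d(y_1,\ldots,y_m)\,t^d=\prod_i(1-y_it)^{-1}$ with the factorization $\prod_{i=1}^k(1-\zeta_it)=1-t^k$, and then invoke the fact that two monic polynomials of degree $k-r$ agreeing at $k-r$ distinct points coincide. Both arguments rest on the same Vieta structure of $x^k-1$, but the generating-function route is shorter, sidesteps the recursion bookkeeping, and makes the role of $1-t^k$ more transparent. Your flag that $r\ge 1$ is tacitly required is accurate and applies to the paper's proof as well: at $r=0$ the paper's reduction would assert $S_k(x)=x^k-1$, which fails since $S_k(x)=x^k$. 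The paper does not state this hypothesis explicitly; in the intended application $r=|U|$ is the size of the clique to which a new simplicial vertex is attached, and the construction is seeded separately with the vertex polynomial $x^k-1$, so the omission is harmless, but you are right that it deserves a remark.
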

\begin{proof}
It suffices to prove $$S_{k-r}(\z_1,\z_2,\ldots,\z_r,x)\cdot(x-\z_1)\cdots(x-\z_r)=x^k-1\ \ .$$Consider the degree $d$-homogeneous polynomial $\sigma_i(\z_1, \ldots, \z_r)S_{d-i}(\z_1, \ldots, \z_r)$. For every monomial $x^\alpha$ with $|\alpha|=d$ and $\supp(\alpha)=m$ (the number of non-zero elements in $\alpha$ equals $m$), its coefficient is the number of square-free factors of degree $i$, that is, $m \choose i$. Summing up these coefficients over $d$ with alternating signs gives that the coefficient of $x^\alpha$ in $$\sum_{i = 0}^d (-1)^{d-i} \sigma_i(\z_1, \ldots, \z_r)S_{d-i}(\z_1, \ldots, \z_r)$$ equals $$\sum_{i = 0}^m(-1)^{d-i}{m\choose i} = 0\ \ .$$

Therefore, $$\sum_{i = 0}^d (-1)^{d-i} \sigma_i(\z_1, \ldots, \z_r)S_{d-i}(\z_1, \ldots, \z_r)=0\ \forall d\in\{0, \ldots, k-1\}\ \ .$$Now, since $\z_1,\ldots,\z_k$ are the roots of unity, we know that $$\sum_{i = 0}^d \sigma_i(\z_1, \ldots, \z_r)\sigma_{d-i}(\z_{r+1}, \ldots, \z_k)=\sigma_d(\z_1,\ldots,\z_k)=0\ \forall d\in\{1, \ldots, k-1\}\ \ .$$

We now have identical recursions for $S_d(\z_1, \ldots, \z_r)$ and $(-1)^{d} \sigma_d ( \z_{r+1}, \ldots, \z_k)$.   In the base case, $S_0(\z_1, \ldots, \z_r)=1=(-1)^0\sigma_0(\z_{r+1},\ldots,\z_k)$.  We conclude that for all $d$, $$S_d(  \z_1, \ldots, \z_r) = (-1)^{d} \sigma_d(\z_{r+1}, \ldots, \z_k)~.$$Therefore,
\begin{eqnarray*}
S_{k-r}(\z_1, \ldots, \z_r, x)\cdot\prod_{i=1}^r(x-\z_i)&=&\sum_{d = 0}^{k-r}S_d(\z_1, \ldots, \z_r)x^{k-r-d}\cdot\prod_{i=1}^r(x-\z_i)\\&=&\sum_{d = 0}^{k-r}(-1)^d\sigma_d(\z_{r+1}, \ldots, \z_k)x^{k-r-d}\cdot\prod_{i=1}^r(x-\z_i)\\&=&\prod_{i=r+1}^k(x-\z_i)\cdot\prod_{i=1}^r(x-\z_i)\\&=&x^k-1\ \ ,
\end{eqnarray*}
as desired.
\end{proof}

\subsection{The algorithm}

Now we are ready to present the Gr\"obner basis algorithm {\bf BuildGr\"obnerBasis}. Our algorithm successively tests vertices for simpliciality in order to obtain a perfect elimination order. (It is certainly possible to achieve faster running time by refining this procedure.) At the same time, we add new polynomials to a set $\G$. At termination, $\G$ is a Gr\"obner basis for $\I_G$ with respect to the lexicographic order in which vertices are ordered accorded to a perfect elimination order. The existence of this algorithm was first conjectured by experimental work of Pernpeintner \cite{michael_thesis}.

For a clique $U=\{u_1,u_2,\ldots,u_r\}$ and vertex $v$ in our graph, we will use the notation $S_{k-r}(U,v)$ to denote the polynomial $S_{k-r}(x_{u_1},x_{u_2},\ldots,x_{u_r},x_v)$.

\begin{minipage}{0.45\linewidth}
\textbf{Input:} A graph $G$\newline
\textbf{Output:} A Gr\"obner basis for $\I_G$
\begin{algorithmic}
\Function{BuildGr\"obnerBasis}{$G$}
	\State $G_n\gets G$
	\State $\G \gets \emptyset$
	\ForAll{$i\in\{n-1, \ldots, 1\}$}
		\ForAll{$v\in V_{i+1}$}
			\If {\Call{IsSimplicial}{$v$}} 
				\State $v_i \gets v$
				\State $U_i \gets \mathcal{N}(v)$
				\State $G_i \gets G_{i+1}-v$
				\State $\G \gets \G\cup\{S_{k-|U_i|}(U_i, v_i)\}$
				\State \ExitFor
			\EndIf
		\EndFor
	\EndFor
	\State \Return $\G$
\EndFunction
\end{algorithmic}
\end{minipage}
\begin{minipage}{0.45\linewidth}
\textbf{Input:} A vertex $v$ of the graph $G$\newline
\textbf{Output:} Whether or not $v$ is simplicial
\begin{algorithmic}
\Function{IsSimplicial}{$v$}
	\State $d\gets \deg(v)$
	\ForAll{$w\in\mathcal{N}(v)$}
		\If {$|\mathcal{N}(v)\cap\mathcal{N}(w)|< d-1$} 
			\State \Return false
		\EndIf
	\EndFor
	\State \Return true
\EndFunction
\end{algorithmic}
\vspace{93pt}
\end{minipage}


\subsection{Correctness}
\begin{thm}\label{lemma:ColoringIdealsRadical}
Let $G$ be a graph. Then $\I_G$ is a radical ideal.
\end{thm}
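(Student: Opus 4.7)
The plan is to reduce the statement to showing that $\C[x_1,\ldots,x_n]/\I_G$ is a reduced ring (i.e., has no nonzero nilpotents), since over any field this is equivalent to $\I_G$ being radical. The key observation is that $\I_G$ contains all the vertex polynomials $\nu_i = x_i^k-1$, which split into distinct linear factors over $\C$.

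First, I would set $R := \C[x_1,\ldots,x_n]/(\nu_1,\ldots,\nu_n)$ and note that $\C[x_1,\ldots,x_n]/\I_G$ is a quotient of $R$. Since $x_i^k-1 = \prod_{\z \in \mu_k}(x_i - \z)$ is a product of pairwise coprime linear factors, the Chinese Remainder Theorem gives $\C[x_i]/(x_i^k-1) \cong \C^k$ as a $\C$-algebra. Taking the tensor product over the $n$ variables yields $R \cong \C^{k^n}$, which is a finite product of fields.

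Next, I would invoke the standard structural fact that ideals in a finite product of fields $\prod_{j=1}^N K_j$ are in bijection with subsets of $\{1,\ldots,N\}$ (an ideal is determined by the set of coordinates on which it is nonzero, since each $K_j$ is a field and has no nontrivial ideals). Consequently, every quotient of $R$ is itself isomorphic to a product of copies of $\C$, and in particular is reduced. Applying this to the quotient by the image of $\I_G$ in $R$ gives that $\C[x_1,\ldots,x_n]/\I_G$ is reduced, so $\I_G$ is radical.

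There is no real obstacle here; the argument is essentially an immediate application of CRT plus the classification of ideals in products of fields. (Equivalently, one could cite Seidenberg's Lemma: an ideal over a perfect field containing a squarefree univariate polynomial in each variable is radical, which applies directly via the $\nu_i$.) The only point worth being a little careful about is the CRT decomposition of $R$, which uses that $\C$ is algebraically closed so that $x_i^k-1$ truly splits; nothing deeper is needed for the statement as written.
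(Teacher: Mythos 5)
Your proof is correct, and it takes a genuinely different (and more self-contained) route than the paper. The paper observes that each $\nu_i = x_i^k-1 \in \I_G \cap \K[x_i]$ is squarefree (since $\gcd(\nu_i,\nu_i')=1$ when $\operatorname{char}\K \nmid k$) and then invokes Seidenberg's Lemma (Kreuzer--Robbiano, Prop.\ 3.7.15) as a black box to conclude that $\I_G$ is radical. You instead give a direct argument: by CRT, $\C[x_1,\ldots,x_n]/(\nu_1,\ldots,\nu_n)\cong \C^{k^n}$, every quotient of a finite product of fields is again a finite product of fields and hence reduced, and therefore the further quotient $\C[x_1,\ldots,x_n]/\I_G$ is reduced, i.e.\ $\I_G$ is radical. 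Your argument is essentially an elementary proof of exactly the case of Seidenberg's Lemma that is needed here; it buys transparency and avoids an external citation, at the modest cost of relying on the $\nu_i$ splitting completely over the ground field (so it works as written over $\C$, or any field containing all $k$th roots of unity with $\operatorname{char}\nmid k$, whereas Seidenberg's Lemma is stated without requiring the factors to be linear). You even note the Seidenberg route as an alternative, so you have effectively reconstructed both the paper's proof and a cleaner direct one.
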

\begin{proof}
For every $i\in\{1, \ldots, n\}$, we have $\nu_i(x)=x_i^k-1\in\I_G\cap\mathbb{K}[x_i]$ by definition. Since $\mathbb{K}$ is algebraically closed and therefore $\nu_i'(x)=k\cdot x_i^{k-1}\ \Longrightarrow\ \gcd(\nu_i, \nu_i')=1$, we can apply Seidenberg's Lemma (\cite{KreuzerRobbiano}, Proposition 3.7.15), which gives the claim.
\end{proof}

\begin{prop}[\cite{coxetal}]\label{prop:RelativelyPrimeGB}
Let $P\subset\mathbb{K}[x_1, \ldots, x_n]$ be a finite set, and let $p_1, p_2\in P$ be relatively prime. 
Then $\text{S-pair}(p_1, p_2)\rightarrow_P 0$.
\end{prop}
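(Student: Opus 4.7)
The plan is to recognize this as Buchberger's First Criterion and prove it by exhibiting an explicit standard representation of the S-pair as a combination of $p_1$ and $p_2$ alone.

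First I would fix notation: write $p_i = c_i M_i + r_i$, where $M_i = \LM(p_i)$, $c_i$ is the leading coefficient, and every monomial of $r_i$ is strictly less than $M_i$ in the ambient monomial order. Rescaling, we may assume $c_1 = c_2 = 1$. Relative primality is interpreted in the standard Gr\"obner-basis sense (as in \cite{coxetal}) as $\gcd(M_1, M_2) = 1$, whence $\lcm(M_1, M_2) = M_1 M_2$ and $\text{S-pair}(p_1, p_2) = M_2 p_1 - M_1 p_2$.

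Next I would substitute $M_i = p_i - r_i$ into this expression to obtain the key identity
$$\text{S-pair}(p_1, p_2) \;=\; M_2 r_1 - M_1 r_2 \;=\; (p_2 - r_2)r_1 - (p_1 - r_1)r_2 \;=\; r_1 p_2 - r_2 p_1,$$
which already expresses the S-pair as a polynomial combination of $p_1, p_2$ alone, with no other members of $P$ required.

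The crux of the argument is then to verify that the right-hand side is a \emph{standard representation}, i.e.\ the leading terms of the two summands $r_1 p_2$ and $-r_2 p_1$ do not cancel. If they did, we would have $\LM(r_1)\, M_2 = \LM(r_2)\, M_1$, and since $\gcd(M_1, M_2) = 1$ this forces $M_1 \mid \LM(r_1)$, and in particular $\LM(r_1) \succeq M_1$, contradicting $\LM(r_1) \prec M_1$. Consequently $\LM(\text{S-pair}(p_1, p_2)) = \max(\LM(r_1) M_2,\, \LM(r_2) M_1) \prec M_1 M_2$, and each summand of $r_1 p_2 - r_2 p_1$ has leading monomial bounded above by $\LM(\text{S-pair}(p_1, p_2))$.

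Finally I would appeal to the standard consequence of the multivariate division algorithm (cf.\ \cite{coxetal}) that a polynomial admitting such a standard representation in terms of a set $Q$ reduces to $0$ modulo $Q$. Applied to $Q = \{p_1, p_2\}$ this gives $\text{S-pair}(p_1, p_2) \to_{\{p_1, p_2\}} 0$, and a fortiori $\text{S-pair}(p_1, p_2) \to_P 0$ since $\{p_1, p_2\} \subseteq P$. The only real obstacle is the last implication, whose careful justification requires an induction on leading monomials during division, but this is routine and handled explicitly in the cited reference.
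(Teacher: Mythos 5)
The paper does not prove this proposition: it is quoted directly from Cox, Little, O'Shea as Buchberger's First Criterion, so there is no ``paper's own proof'' to compare against. Your plan correctly identifies the result and reproduces the key computation from the standard reference: normalize leading coefficients, rewrite the S-pair as $r_1 p_2 - r_2 p_1$, and use $\gcd(M_1,M_2)=1$ to rule out cancellation of the two leading monomials $\LM(r_1)M_2$ and $\LM(r_2)M_1$. This is exactly the crux of the CLO argument.

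The one flawed step is the final appeal: it is \emph{not} a general fact that a polynomial admitting a standard representation with respect to a set $Q$ reduces to $0$ modulo $Q$. For a counterexample take $Q=\{x^2,\;x^2+y\}$ under lex with $x\succ y$ over $\Q$; then $f=2x^2+y = 1\cdot x^2 + 1\cdot(x^2+y)$ is a standard representation ($\LM$ of each summand is $x^2=\LM(f)$), yet every top-reduction of $f$ by $Q$ yields $\pm y$, which is irreducible and nonzero, so $f\not\to_Q 0$. What actually closes the argument in the relatively prime case is that the no-cancellation property you derived is \emph{preserved under reduction}: after one step, $S - \LT(r_1)p_2 = r_1' p_2 - r_2 p_1$ with $\LM(r_1')\prec\LM(r_1)\prec M_1$, so the identical $\gcd$ reasoning again forbids $\LM(r_1')M_2 = \LM(r_2)M_1$, and induction on the number of monomials of $r_1$ and $r_2$ drives the reduction all the way to $0$. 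You gesture at an induction in your last sentence, so the idea is nearly there; just replace the cited general ``standard representation $\Rightarrow$ reduction to $0$'' fact (which is false without a Gr\"obner-basis hypothesis on $Q$) with this invariance observation.
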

Recall that $v_i \in \I_G$, and $\eta_{ij} \in \I_G$ are the vertex and edge polynomials, respectively.
\begin{lemma}\label{lemma:Correctness1}
Let $G$ be a graph on $n$ vertices, and let $\succ$ be a term order. Let $U=\{u_1, \ldots, u_r\}$ be an $r$-clique in $G$, and choose a Gr\"obner basis $\G$ of $\I_G$. Set $p=S_{k-r}(x_{u_1}, \ldots, x_{u_r}, x_{n+1})$. Then,\vspace{-3pt}$$\lz \G, p\rz=\lz \G, \nu_{n+1}, \eta_{u_1, n+1}, \ldots, \eta_{u_r, n+1}\rz=\I_{G^{+U}}\ \ .$$
\end{lemma}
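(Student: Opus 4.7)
The second equality $\lz \G, \nu_{n+1}, \eta_{u_1,n+1}, \ldots, \eta_{u_r,n+1}\rz = \I_{G^{+U}}$ is immediate from the definition of $\I_{G^{+U}}$ together with the hypothesis that $\G$ generates $\I_G$: the graph $G^{+U}$ adds only the vertex $n+1$ and the edges $\{u_i, n+1\}$ to $G$, so its generating set of vertex and edge polynomials consists precisely of those of $\I_G$ augmented by $\nu_{n+1}$ and the $\eta_{u_i, n+1}$.

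For the first equality, my plan is to prove both inclusions via a radicality-variety argument, so that I never have to verify the $\eta_{u_i, n+1}$ individually. For the easy direction $\lz\G,p\rz\subseteq \I_{G^{+U}}$, the radicality of $\I_{G^{+U}}$ (Theorem \ref{lemma:ColoringIdealsRadical}) reduces the task to checking that $p$ vanishes on $\V(\I_{G^{+U}})$. At any such point, the clique $\{u_1,\ldots,u_r,n+1\}$ of $G^{+U}$ forces $x_{u_1},\ldots,x_{u_r},x_{n+1}$ to be $r+1$ pairwise distinct $k$th roots of unity, whereupon Lemma \ref{lemma:AugmentingPolynomial} factors $p$ as a product one of whose terms is $x_{n+1} - x_{n+1} = 0$.

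The heart of the argument is the reverse inclusion, and the main step is to extract $\nu_{n+1}$ from $\lz\G,p\rz$ via the auxiliary polynomial
$$f := p \cdot \prod_{i=1}^r (x_{n+1} - x_{u_i}) - (x_{n+1}^k - 1)\in \C[x_{u_1},\ldots,x_{u_r},x_{n+1}].$$
By Lemma \ref{lemma:AugmentingPolynomial}, $f$ specializes to the zero polynomial in $x_{n+1}$ at every point of $\V(\I_{K_U})$, where $K_U$ denotes the complete graph on $U$. Writing $f=\sum_j f_j(x_{u_1},\ldots,x_{u_r})\, x_{n+1}^j$, each coefficient $f_j$ therefore vanishes on $\V(\I_{K_U})$; since $\I_{K_U}$ is radical, Nullstellensatz gives $f_j\in\I_{K_U}$. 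Because $U$ is a clique in $G$, we have $\I_{K_U}\subseteq \I_G=\lz\G\rz$ after extending scalars to $\C[x_1,\ldots,x_{n+1}]$, so $f\in\lz\G\rz$, and rearranging yields $\nu_{n+1} = p\cdot\prod_{i=1}^r(x_{n+1}-x_{u_i}) - f \in\lz\G,p\rz$.

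With $\nu_{n+1}\in\lz\G,p\rz$, the ideal $\lz\G,p\rz$ contains $x_i^k-1$ for every variable $x_i$, so Seidenberg's Lemma (applied as in Theorem \ref{lemma:ColoringIdealsRadical}) shows $\lz\G,p\rz$ is radical. A direct check using Lemma \ref{lemma:AugmentingPolynomial} then gives $\V(\lz\G,p\rz)\subseteq \V(\I_{G^{+U}})$: any point satisfying $\G$ restricts to a proper $k$-coloring of $G$, and $p=0$ forces $x_{n+1}$ to be a $k$th root of unity distinct from each $x_{u_i}$. Combined with the earlier inclusion, the two radical ideals have the same variety and hence coincide. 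The main obstacle is the identity $f\in\I_{K_U}$: it is conceptually immediate from Lemma \ref{lemma:AugmentingPolynomial}, but upgrading this pointwise identity to an ideal containment requires the radicality of the coloring ideal of the clique.
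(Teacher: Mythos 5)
Your proof is correct, and it takes a genuinely different route from the paper's — in fact a more careful one. The paper also proceeds by a radicality-plus-variety argument, but it establishes radicality of $\lz\G,p\rz$ through the chain
$\rad(\lz \G, p\rz)=\rad(\lz\G\rz\cap\lz p\rz)=\rad(\lz\G\rz)\cap\rad(\lz p\rz)=\lz\G\rz\cap\lz p\rz=\lz\G, p\rz$,
which identifies $\lz\G,p\rz$ with the \emph{intersection} $\lz\G\rz\cap\lz p\rz$ rather than with the sum $\lz\G\rz+\lz p\rz$. Those are not the same ideal in general (their varieties are $\V(\G)\cap\V(p)$ and $\V(\G)\cup\V(p)$, respectively), and a sum of radical ideals need not be radical, so the paper's justification of radicality is incomplete as written. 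Your argument patches exactly this point: by multiplying $p$ by $\prod_i(x_{n+1}-x_{u_i})$ and reducing modulo $\I_{K_U}\subseteq\lz\G\rz$, you extract $\nu_{n+1}=x_{n+1}^k-1$ as an element of $\lz\G,p\rz$. Once all the univariate polynomials $x_i^k-1$ lie in the ideal, Seidenberg's Lemma applies directly and gives radicality without any claims about $p$ being squarefree. The variety comparison you then run is essentially the same as the paper's, and the two inclusions you check are the same ones the paper checks. The net effect is that you deliver the same structure of proof but with a sound replacement for the step the paper treats too loosely, at the modest cost of the auxiliary computation with $f$. One small thing worth flagging explicitly: the factorization in Lemma~\ref{lemma:AugmentingPolynomial} requires $r<k$, so you should note (as the paper also implicitly assumes) that when $r\ge k$ the statement is trivial because $\I_{G^{+U}}=\lz 1\rz$ and $p$ is the constant $1$ or $\G$ already generates the unit ideal.
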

\begin{proof}
We show that $\lz \G, p\rz$ is a radical ideal, and that both ideals generate the same variety. Then the claim follows from the bijection between varieties and radical ideals (\cite{coxetal}, Chapter 4, \S2, Theorem 7).

Consider some setting of the variables $x_{u_1},\ldots,x_{u_r}$ to distinct $k$th roots of unity $\z_1,\ldots,\z_r$, and suppose that $\z_{r+1},\ldots,\z_k$ are the other $k$th roots of unity, in some order.  By Lemma \ref{lemma:AugmentingPolynomial}, we have $p=\prod_{i=r+1}^k(x_{n+1}-\z_i)$. This implies that $p(x_{u_1},x_{u_2},\ldots,x_{u_r},x_{n+1})$ is a square-free polynomial so $\lz p\rz$ is a radical ideal. The ideal $\lz\G\rz$ is also radical, since it is the coloring ideal of a graph (Lemma~\ref{lemma:ColoringIdealsRadical}). But then $$\rad(\lz \G, p\rz)=\rad(\lz\G\rz\cap\lz p\rz)=\rad(\lz\G\rz)\cap\rad(\lz p\rz)=\lz\G\rz\cap\lz p\rz=\lz\G, p\rz$$ as claimed. The second equality is \cite{coxetal}, Chapter 4, \S3, Proposition 16.

Now consider $\xx=(x_1, \ldots, x_{n+1})\in\V(\lz\G, p\rz)$. Since $u_1,\ldots,u_r$ form a clique, we know that $x_{u_i}$ are distinct $k$th roots of unity. Then, by Lemma \ref{lemma:AugmentingPolynomial}, $x_{n+1}$ is a $k$th root of unity, and so $\nu_{n+1}=0$. Moreover, $x_{n+1}\neq x_{u_i}\ \forall\ i\in\{1, \ldots, r\}$, which implies that $\eta_{u_i,n+1}=0$. We conclude that $\xx\in\V(\I_{G^{+U}})$.

Conversely, consider $\xx=(x_1, \ldots, x_{n+1})\in\V(\I_{G^{+U}})$. The generator polynomials $\nu_1, \ldots, \nu_r, \nu_{n+1}$ and $\eta_{u_1, n+1}, \ldots, \eta_{u_r, n+1}$ ensure that $x_{u_1}, \ldots, x_{u_r},x_{n+1}$ are distinct $k$th roots of unity. Hence $p(\xx)=0$ and $\xx\in\V(\lz\G, p\rz)$, completing our proof.
\end{proof}

\begin{lemma}\label{lemma:Correctness2}
For every Gr\"obner basis $\G$ of $\I_G$ with respect to $\succ$, $\G\cup\{p\}$ is a Gr\"obner basis of $\I_{G^{+U}}$ with respect to an extended term order $\succ'$, where $p$ is again defined as in Lemma \ref{lemma:Correctness1}.
\end{lemma}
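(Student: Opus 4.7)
The plan is to verify that $\G \cup \{p\}$ satisfies Buchberger's criterion with respect to a well-chosen extended term order, using Lemma \ref{lemma:Correctness1} to handle the ``generates the right ideal'' part of the statement for free.

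First I would specify the extended order: let $\succ'$ be any term order on $\K[x_1,\ldots,x_{n+1}]$ that restricts to $\succ$ on $\K[x_1,\ldots,x_n]$ and has $x_{n+1}$ larger than every monomial in $x_1,\ldots,x_n$ (for instance, a block/elimination order with $\{x_{n+1}\}$ as the first block and $\{x_1,\ldots,x_n\}$ as the second block, using $\succ$ inside). Under such a $\succ'$, every monomial appearing in $p = S_{k-r}(x_{u_1},\ldots,x_{u_r},x_{n+1})$ has $x_{n+1}$-degree at most $k-r$, and only one monomial achieves this maximum, namely $x_{n+1}^{k-r}$. Hence $\LM_{\succ'}(p) = x_{n+1}^{k-r}$.

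Next I would apply Buchberger's criterion to $\G \cup \{p\}$. The S-pairs between two elements of $\G$ already reduce to zero modulo $\G$ because $\G$ is a Gr\"obner basis of $\I_G$ with respect to $\succ$; since $\succ'$ agrees with $\succ$ on the subring $\K[x_1,\ldots,x_n]$ containing these polynomials, they continue to reduce to zero with respect to $\succ'$. For the new S-pairs, observe that any $g \in \G$ lies in $\K[x_1,\ldots,x_n]$, so $\LM_{\succ'}(g)$ does not involve $x_{n+1}$, whereas $\LM_{\succ'}(p) = x_{n+1}^{k-r}$ involves only $x_{n+1}$. Thus $\LM_{\succ'}(g)$ and $\LM_{\succ'}(p)$ are coprime, and by Proposition \ref{prop:RelativelyPrimeGB}, the S-pair $S(g,p)$ reduces to zero modulo $\G \cup \{p\}$. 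By Buchberger's criterion, $\G \cup \{p\}$ is therefore a Gr\"obner basis of the ideal it generates.

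Finally I would invoke Lemma \ref{lemma:Correctness1} to conclude that the ideal it generates is precisely $\I_{G^{+U}}$, finishing the proof. I do not anticipate a genuine obstacle here: the only substantive point is correctly identifying $\LM_{\succ'}(p)$, and the key structural fact doing all the real work is the disjointness of the variable supports of $\LM_{\succ'}(g)$ and $\LM_{\succ'}(p)$, which lets the coprime-leading-term shortcut dispose of every new S-pair at once.
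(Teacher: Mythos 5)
Your proof is correct and follows essentially the same route as the paper's: use Lemma \ref{lemma:Correctness1} to identify the ideal, note that $\LM_{\succ'}(p)=x_{n+1}^{k-r}$ is coprime to the leading monomial of every $g\in\G$ (since $x_{n+1}$ is absent from $\G$), and invoke Proposition \ref{prop:RelativelyPrimeGB} to dispatch all the new S-pairs. Your version is slightly more explicit than the paper's — you spell out that $\succ'$ should be a block order with $x_{n+1}$ dominant and you explicitly note that the old S-pairs still reduce under $\succ'$ — but these are points the paper leaves implicit rather than genuine differences of method.
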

\begin{proof}
Lemma \ref{lemma:Correctness1} shows that $\lz\G, p\rz=\I_{G^{+U}}.$ Hence, it is left to show that all $S$-polynomials in $\G\cup \{p\}$ reduce to $0$. We only have to consider $S$-pairs that involve the new polynomial $p$. \\
By definition of $\succ'$, we have that $\LM_{\succ'}(p)=x_{n+1}^{k-r}$, which is relatively prime to all $g\in\G$, since $x_{n+1}$ does not appear in this basis. Therefore, $$\text{S-pair}(g, p)\rightarrow_{\G\cup\{p\}}0\ \ \forall g\in\G$$ by Lemma \ref{prop:RelativelyPrimeGB}. This is sufficient for $\G':=\G\cup\{p\}$ to be a Gr\"obner Basis.
\end{proof}

\begin{thm}
Upon termination of \textsc{BuildGr\"obnerBasis}$(G)$, the set $\G$ is a Gr\"obner basis for $\I_G$ under the $\Lex$ order, where the vertices are ordered in the perfect elimination order that was established in the algorithm.
\end{thm}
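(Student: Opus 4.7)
The plan is induction on $n = |V(G)|$, with Lemma \ref{lemma:Correctness2} driving the inductive step. The central observation is that the algorithm should be read ``in reverse'': although \textsc{BuildGr\"obnerBasis} peels off simplicial vertices from $G$, one may equivalently view it as constructing $G$ by \emph{adding} vertices one at a time, each attached to a clique in the graph built so far.

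Concretely, let $v_{n-1}, v_{n-2}, \ldots, v_1$ denote the simplicial vertices found at iterations $i = n{-}1, n{-}2, \ldots, 1$ of the main loop, and let $v_{\ast}$ be the sole vertex remaining in $G_1$. The chain
\[
G_1 \subsetneq G_2 \subsetneq \cdots \subsetneq G_n = G
\]
can then be rebuilt in the opposite direction by successively adjoining $v_i$ to $G_i$; its neighborhood $U_i$ is a clique in $G_i$ precisely because $v_i$ is simplicial in $G_{i+1}$. Equivalently, $G_{i+1} = G_i^{+U_i}$, which is exactly the setup needed to apply Lemma \ref{lemma:Correctness2}. The existence of such a chain is guaranteed by the recursive simpliciality of chordal graphs (and the algorithm is simply discovering it).

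I would then run the induction on $i$ in the forward (build-up) direction. The inductive claim is that after all algorithm iterations with index at least $i+1$ have completed, the set $\G^{(i)}$ of polynomials accumulated so far, together with $x_{v_{\ast}}^k - 1$, is a Gr\"obner basis for $\I_{G_i}$ with respect to the lex order on $V(G_i)$ inherited from the global lex order on $V(G)$. The base case $i = 1$ is immediate because $\I_{G_1} = \lz x_{v_{\ast}}^k - 1\rz$, and a singleton is trivially a Gr\"obner basis. For the inductive step, the algorithm appends $p_i = S_{k - |U_i|}(U_i, v_i)$ at the iteration corresponding to the transition $G_i \to G_{i+1}$. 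Since $U_i$ is a clique in $G_i$ and, under the intended lex order, $v_i$ becomes the new lex-maximum variable, Lemma \ref{lemma:Correctness2} applies verbatim and yields that $\G^{(i)} \cup \{p_i\}$ is a Gr\"obner basis for $\I_{G_{i+1}} = \I_{G_i^{+U_i}}$ under the extended lex order. Taking $i = n-1$ then delivers the theorem.

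The only piece requiring care is verifying that the sequence of ``extended term orders $\succ'$'' arising from repeated applications of Lemma \ref{lemma:Correctness2} is consistent with a single global lex order on $V(G)$. This is not a serious obstacle: in the reverse perfect elimination ordering, $v_i$ appears later in the build-up than every vertex of $U_i$, so placing $v_i$ as the largest variable at each step is compatible with the global lex order $v_{n-1} > v_{n-2} > \cdots > v_1 > v_{\ast}$ on $V(G)$. Restricting a lex order to a subset of variables remains a lex order, so the hypothesis at stage $i$ is exactly the restriction of the full lex order to $V(G_i)$, and the induction propagates cleanly. The entire argument thus reduces to reverse-order bookkeeping once Lemma \ref{lemma:Correctness2} is in place.
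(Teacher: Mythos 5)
Your proposal is correct and follows essentially the same route as the paper: start from the one-vertex base case, view the peeling of simplicial vertices in reverse as building $G$ one vertex at a time by $G_{i+1} = G_i^{+U_i}$, and apply Lemma \ref{lemma:Correctness2} at each step. The one thing you add beyond the paper's terse argument is the explicit verification that the successive ``extended'' term orders glue consistently into the single global lex order given by the reverse perfect elimination ordering, and the explicit note that the base-case vertex polynomial $x_{v_\ast}^k-1$ must be present — both of which the paper states only implicitly (it writes $\G = \{p_1,\ldots,p_n\}$ with $p_1 := \nu_n$ even though the loop body adds only $n-1$ polynomials).
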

\begin{proof}
Note that $\{p_1:=\nu_n\}$ is a Gr\"obner basis for $G_1$. By Lemma \ref{lemma:Correctness2}, this basis can be extended in $n-1$ steps by adding $p_i$ as constructed in the algorithm. Therefore, $\G=\{p_1, \ldots, p_n\}$ is a Gr\"obner basis of $G_n=G$ with respect to the extended vertex order, which concludes the proof. 
\end{proof}

\subsection{Remarks}
As we have seen above, exactly one polynomial is added to $\G$ for every vertex of $G$. But what is the degree and length of these polynomials?

From the definition of $p:=S_k(x_1, \ldots, x_n)$, we see that $\len(p)={{k+n-1}\choose{n-1}}$ and $\deg(p)=k$. Therefore, we add polynomials $S_i$ with $\len(S_i)={k\choose{|U_i|}}$ and $\deg(S_i)=k-|U_i|$. Note that, for a fixed number $k$ of colors, both numbers are polynomials. 

The function \textsc{IsSimplicial} consists of an outer loop with exactly $n$ iterations, in each of which the intersection of two subsets of $V$ is formed. Such an intersection can be computed in linear time, therefore the function runs in time $O(n^2)$.

In the main function \textsc{BuildGr\"obnerBasis}, the two nested \textbf{for}-loops are traversed $O(n)$ times each, and every time \textsc{IsSimplicial} is called. The main part of the \textbf{if}-case is the assignment of $\G$. If $r=|U_i|$, then building the polynomial $S_{k-|U_i|}(U_i, v_i)$ takes $(k-r)\cdot{k\choose r}$ steps, which is clearly in $O(kn^k)$. The remaining statements can be neglected, since they have running time $O(n^2)$. Finally, putting the pieces together, we obtain a total running time of $$O(kn^{k+2})\ \ ,$$which is polynomial in $n$ for fixed $k$.

It is evident that our implementation is not optimal with respect to running time. For instance, finding a simplicial vertex can be done in linear time \cite{MCSalg}, and there is even a linear-time procedure that establishes a perfect elimination order on $G$. Nevertheless, our algorithm shows that finding the Gr\"obner basis for a chordal graph is polynomial-time solvable, and it describes explicitly the structure of this basis.

What happens in the process of the algorithm if $G$ is not $k$-colorable? Intuitively, we would expect the constant polynomial $1$ to appear somewhere in the set $B$.  This can be shown formally: Assume that $\chi(G)=\chi>k$, and we try to find a Gr\"obner basis for the $k$-coloring ideal of $G$. Since $G$ is chordal, it is also perfect, and thus has a $\chi$-clique $\{v_1, \ldots, v_\chi\}$. We assume without loss of generality that these vertices are ordered ascendingly with respect to the perfect elimination order from the algorithm. 

In the step, where $v_{k+1}$ is removed from the graph, we have $\{v_1, \ldots, v_k\}\subseteq \mathcal{N}(v_{k+1})$, and therefore, we add the complete polynomial of degree 0 $$S_{k-k}(x_{v_1}, \ldots, x_{v_k}, x_{v_{k+1}})=1\ \ .$$
Hence, \textsc{BuildGr\"obnerBasis} detects non-$k$-colorability on the fly. This observation suggests the following simple improvement on the algorithm: If we find a simplicial vertex of degree $\geq k$, then we can stop immediately and return the trivial Gr\"obner basis $B=\{1\}$. On the other hand, we can be sure that if there is no such forbidden vertex, then $G$ is $k$-colorable.


\section{Acknowledgements}

This research is based upon work supported by the National Science Foundation Grant No.~DMS-1321794 and the NSF Graduate Research Fellowship under Grant No.~1122374.  We are very grateful to the AMS Mathematical Research Communities program, and especially Ellen J.~Maycock, for their support of this project. The authors wish to thank Hannah Alpert for her extremely helpful thoughts. We are also grateful to the Simons Institute and would like to thank Agnes Szanto and Pablo Parrilo for their constructive comments.

\newpage

\end{document}